\documentclass[journal]{IEEEtran}

\usepackage{algorithm}
\usepackage{algorithmic}
\usepackage{amsmath}
\usepackage{amssymb}
\usepackage{bm}
\usepackage{bbm}
\usepackage{graphicx}
\usepackage{xcolor}
\usepackage{amsthm}
\usepackage{subfigure}

\usepackage{tcolorbox}

\newtheorem*{Question}{Question}

\newtheorem{Theorem}{Theorem}

\newtheorem{Remark}{Remark}

\def\N{\mathcal{N}}

\def\S{\mathcal{S}}

\def\S{\mathcal{S}}

\def\bx{\mathbf{x}}
\def\by{\bm{y}}
\def\bg{\mathbf{g}}

\def\hy{\hat{y}}

\def\bomega{\bm{\omega}}

\def\btheta{\mathcal{\bm{\theta}}}

% 指定编码为 UTF-8
\usepackage[utf8]{inputenc}

\newcommand{\Meng}[1]{\ifthenelse{\boolean{showcomments}}
{ \textcolor{blue}{(Meng says:  #1)}}{}}

%=== Editing tools ============
\ifodd 1

\newcommand{\com}[1]{\textbf{\color{red} (Comment: #1) }}
\newcommand{\comg}[1]{\textbf{\color{blue} (COMMENT: #1})}
\newcommand{\response}[1]{\textbf{\color{blue} (RESPONSE: #1})}
\newcommand{\deleting}[1]{}
\else

\newcommand{\com}[1]{}
\newcommand{\comg}[1]{}
\newcommand{\response}[1]{}
\newcommand{\deleting}[1]{}
\fi

\title{MoE$^2$: Optimizing Collaborative Inference\\ for Edge Large Language Models}

\author{
 Lyudong Jin, Yanning Zhang, Yanhan Li, Shurong Wang, Howard H. Yang,~\IEEEmembership{Member,~IEEE}, Jian Wu,~\IEEEmembership{Senior Member,~IEEE}, and Meng Zhang,~\IEEEmembership{Member,~IEEE}
     \thanks{Lyudong Jin, Yanning Zhang, Yanhan Li, Shurong Wang, Howard H. Yang, and Meng Zhang are with the Zhejiang University\text{—}University of Illinois at Urbana\text{–}Champaign Institute, Zhejiang University, Haining 314400, China (e\text{-}mails: 3180101183@zju.edu.cn;  yanning.22@intl.zju.edu.cn;  yanhan.24@intl.zju.edu.cn;  shurong.22@intl.zju.edu.cn; haoyang@intl.zju.edu.cn;   mengzhang@intl.zju.edu.cn). 
     
     Jian Wu is with Zhejiang Key Laboratory of Medical Imaging Artificial Intelligence, Zhejiang University, Hangzhou, China (e\text{-}mail: wujian2000@zju.edu.cn).}
    }
\begin{document}

\maketitle

\begin{abstract}

Large language models (LLMs) have demonstrated remarkable capabilities across a wide range of natural language processing tasks. 
% On-device large language models (LLMs), referring to running LLMs on edge devices, have raised considerable interest since they are more cost-effective, latency-efficient, and privacy-preserving compared with the cloud paradigm.
Exploiting the heterogeneous capabilities of edge LLMs is crucial for diverse emerging applications, as it enables greater cost-effectiveness and reduced latency.
% However, they pose significant challenges due to the heterogeneity of edge LLMs across various attributes, such as model capability, delay, and energy consumption.
% challenges due to their high energy consumption and latency constraints. 
In this work, we introduce \textit{Mixture-of-Edge-Experts (MoE$^2$)}, a novel collaborative inference framework for edge LLMs. 
 We formulate the joint gating and expert selection problem to optimize inference performance under energy and latency constraints. 
Unlike conventional MoE problems, LLM expert selection is significantly more challenging due to the combinatorial nature and the heterogeneity of edge LLMs across various attributes. 
To this end, we propose a two-level expert selection mechanism through which we uncover an optimality-preserving property of gating parameters across expert selections. This property enables the decomposition of the training and selection processes, significantly reducing complexity.
Furthermore, we leverage the objective's monotonicity and design a discrete monotonic optimization algorithm for optimal expert selection. 
% By leveraging the sparsity and modularity of MoE, our approach achieves a favorable trade-off between performance, latency, and energy efficiency, making it well-suited for real-time applications in mobile edge computing scenarios.
We implement edge servers with NVIDIA Jetson AGX Orins and NVIDIA RTX 4090 GPUs, and perform extensive experiments. Our results validate that performance improvements of various LLM models and
show that our MoE$^2$ method can achieve optimal trade-offs among different delay and energy budgets, and outperforms baselines under various system resource constraints.

\end{abstract}

 \begin{IEEEkeywords}
  Mixture-of-experts, large language model, collaborative inference, edge intelligence, monotonic optimization.
 \end{IEEEkeywords}
 
\section{Introduction}

\subsection{Background and Motivations}
\textit{Large language models (LLMs)} represent a significant breakthrough in artificial intelligence, particularly in the field of natural language processing (NLP).
Built primarily on the transformer architecture \cite{transformer}, LLMs are trained on vast and diverse datasets, enabling them to generate and contextualize human language to a remarkable extent.
Prominent examples, such as OpenAI's GPT series \cite{DBLP:journals/corr/abs-2303-08774, DBLP:conf/nips/Ouyang0JAWMZASR22} and Google's PaLM \cite{DBLP:journals/jmlr/ChowdheryNDBMRBCSGSSTMRBTSPRDHPBAI23, DBLP:conf/icml/DriessXSLCIWTVY23, anil2023palm}, demonstrate the capability of LLMs to perform a wide range of tasks, including text generation, summarization, translation, and question answering. 
Their versatility stems from their ability to learn and generalize patterns in language, making them indispensable for applications spanning from conversational AI to code generation and beyond. 
However, this capability comes at the cost of immense computational and memory requirements, as LLMs often contain billions of parameters. 
This resource-intensive nature creates challenges for their deployment, particularly in environments where computational resources are limited, such as edge devices.
As the demand for real-time, localized AI services grows, addressing these challenges has become a critical focus for researchers and practitioners alike.

\begin{figure}[!t]
    \centering
    \includegraphics[width=0.9\linewidth]{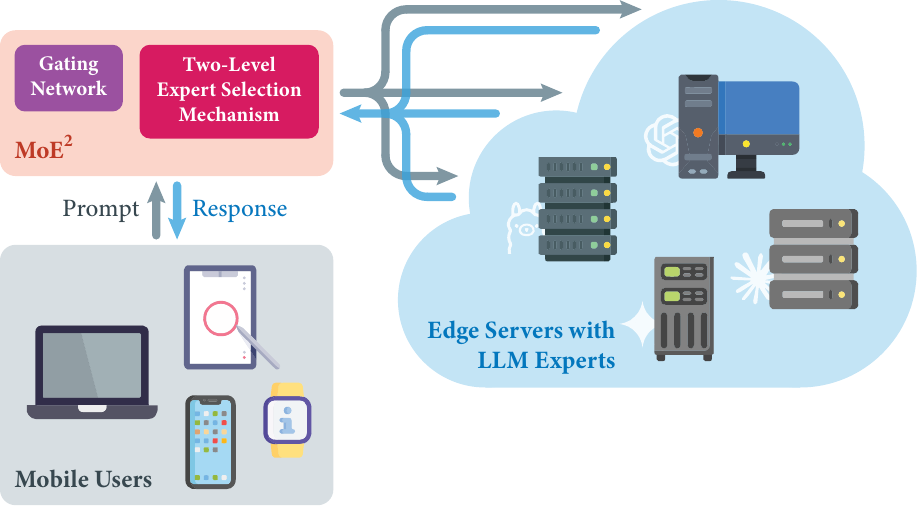}
    \caption{An overview of the MoE$^2$ framework, which leverages a gating network and a two-level expert selection mechanism for efficient task processing. Mobile users send prompts, which are routed to a subset of edge servers with LLM experts. The expert selection method is carefully designed to enable a scalable and efficient framework for diverse LLM applications.}
    \label{fig:system}
    \vspace{-1\baselineskip}
\end{figure}

On the other hand, 
% mobile edge computing (MEC) networks, mobile users randomly arrive over time to offload intensive machine learning tasks with unknown data distributions to edge servers with superior computing capabilities (e.g., [1], [2])
the convergence of edge intelligence, mobile edge computing, and LLMs is redefining the landscape of AI-driven applications by enabling powerful computational capabilities closer to end users. \textit{Mobile Edge Computing (MEC)} is a paradigm that brings computational resources closer to end users by deploying processing power and storage at the edge of mobile networks, while edge intelligence leverages the processing power of edge devices to perform real-time, localized decision-making. 
% Large language models, with their ability to generate human-like text and perform a wide range of natural language processing tasks, are increasingly being integrated into edge environments to provide personalized, context-aware services. 
However, the deployment of LLMs at the edge is inherently challenging due to their high computational and memory requirements, which often exceed the resource constraints of edge devices. This interplay between edge intelligence, mobile edge computing, and LLMs highlights the need for innovative task offloading techniques to balance performance, resource efficiency, and user experience. Understanding this dynamic is crucial for unlocking the full potential of LLMs in edge settings, enabling applications such as real-time translation, conversational AI, and intelligent assistance to flourish in resource-constrained environments.

In this paper, we aim to bridge this gap by answering the following key question:
\begin{Question}
 How can collaborative inference leverage heterogeneous edge LLMs  to handle diverse task requests (prompts) to  balancing performance, latency, and energy consumption?
    % considering    
    % an edge LLM system be designed to balance performance, latency, and energy efficiency?
\end{Question}
    
\subsection{Key Challenges and Solution Approach}

In this work, we propose deploying \textit{Mixture-of-Experts (MoE)} paradigm at network edges to enable efficient collaborative edge inference. Originally introduced to tackle complex tasks by dividing them into subtasks and assigning them to specialized experts, MoE \cite{jacobs1991adaptive, jordan1994hierarchical} has evolved into a powerful approach for efficiently up-scaling models through sparse activation \cite{DBLP:conf/iclr/ShazeerMMDLHD17, DBLP:journals/corr/abs-2006-16668, fedus2022switch}.
In this sense, MoE is particularly suited for addressing the computational and efficiency challenges posed by large-scale models, such as LLMs.
Unlike traditional models, which activate all parameters during inference, MoE dynamically activates a small subset of specialized ``expert'' submodels for each task \cite{jacobs1991adaptive}.
This dynamic routing mechanism significantly scales model capacity without a proportional increase in computational overhead, ensuring high efficiency for resource-intensive tasks.
By leveraging only the most relevant experts for a given task, MoE not only achieves better performance but also optimizes resource utilization, such as memory and processing power.
MoE's modular and sparse traits has great potential in reducing training and inference costs for LLMs.

In the context of edge networks, MoE provides a powerful approach for leveraging the heterogeneous capabilities of edge LLM experts by adapting computations to local resource constraints and task-specific requirements. By selecting a subset of edge LLM experts tailored to each inference task, MoE effectively minimizes computational overhead and latency. Furthermore, strategically deploying MoE at the edge enables optimization of LLM performance while reducing energy consumption and latency, thereby enhancing the overall user experience.

However, several key challenges remain, including:
\begin{enumerate}
    \item \textbf{Various Attributes across Edges:}
    % Compared to conventional MoE architectures, lies in the joint selection of LLM experts and the training of gating parameters. 
    Traditional MoE approach typically choose experts merely based on gating values.
    % These routing networks often add tunable noise and retain only the top $k$ values before applying the softmax function (as in \eqref{softmax}).
    Although this approach may reduce computational overhead, it cannot be deployed directly in networks with edge LLMs that are heterogeneous in various attributes, including capability, energy consumption, and latency.
    % The deployment of LLMs on edge devices requires significant computational resources, leading to high energy consumption and latency, which compromises the system efficiency and user experience. In addition, LLMs can have billions of parameters, resulting in models that are several gigabytes in size. That makes it challenging to deploy them on edge devices with limited memory and storage capacity without compromising performance.
    \item \textbf{Combinatorial Optimization:} To satisfy the system constraints, we proposed an optimization problem to select the optimal subset of LLM experts for each query. However, this problem is challenging due to the expert subset selection's combinatorial nature, the loss function's non-convexity, and the complex interplay between system constraints. Traditional optimization techniques may not be suitable to resolve these competing objectives simultaneously.
\end{enumerate}

In this work, we introduce \textit{Mixture-of-Edge-Experts (MoE$^2$)}, a novel collaborative inference framework for edge LLMs, as shown in Fig. \ref{fig:system}. Our framework introduces a two-level expert selection mechanism. At the coarse-grained level, the selection is optimization-based, ensuring worst-case bounds on energy consumption and latency. At the fine-grained level, experts are dynamically selected based on input prompts through a routing network. 
This approach effectively leverages the heterogeneity in the capabilities of edge LLM experts to handle diverse tasks.
Our main contributions are summarized as follows:
\begin{itemize}
    \item \emph{MoE Design and Problem Formulation:}
    % We design the MoE framework for   MoE-aided inference over the edge LLMs \textcolor{red}{X}.
    We formulate the first MoE-aided inference offloading problem for edge LLMs by optimally designing the \textit{gating network} and \textit{expert selection}, subject to energy consumption and latency constraints. This problem is inherently combinatorial and challenging due to the heterogeneous nature of edge LLMs.
    \item \emph{Optimal Solution Structures:} To address the challenges of combinatorial optimization, we leverage the structural properties of the optimal solution. Specifically, we show that: (i) the optimality of gating parameters for the full LLM set extends to its subsets, and (ii) the objective value exhibits a monotonic improvement property during LLM set selection. These analytical insights demonstrate that the training of gating parameters can be decoupled from the selection of LLM experts, and facilitate our optimization algorithm design.
    \item \emph{Algorithm Design.} The MoE$^2$ framework, including a training stage and an inference stage, is designed based on two derived theorems. 
    % which explore the optimality and monotonic improvement property of subset selection.
    We also propose a cluster-based domain identification method to create domain experts, which efficiently utilizes the strength of our MoE framework.
    \item \emph{Implementation:} We implement MoE$^2$ on edge servers, and perform extensive experiments. Our results validate that performance improvements of various LLM models and show that MoE$^2$ can adapt to different delay and energy budgets with optimal trade-offs, and achieves strong performance compared to baselines under various system resource constraints.
\end{itemize}

We organize the rest of this paper as follows. Section \ref{Sec:Literature} presents the literature review. Section \ref{Sec:Model}
presents the system model and formulates the problem. The proposed scheme is presented in Section \ref{Sec: Methodology}. Sections \ref{Sec: Simulation Results} and \ref{Sec:Experiment} present the simulation and experimental results, respectively. Finally, this paper is concluded in Section \ref{Sec: Con}.

\section{Related Works} \label{Sec:Literature}

\subsection{Mobile Edge Computing and Collaborative Inference}

In recent years, extensive research has explored MEC across various settings (e.g., \cite{ouyang2019adaptive, tran2018joint, singh2017optimize, mao2017survey}). This work primarily reviews the collaborative inference literature in this domain, which focuses on deploying machine learning models at the edge to enable real-time data processing and decision-making \cite{li2019edge}.
The collaborative inference literature can be categorized into two main approaches: \textit{model partitioning} and \textit{edge-cloud collaboration}.
In edge-cloud collaboration, edge devices and the cloud cooperate to balance system performance with resource constraints.
Time-sensitive tasks are processed locally on edge devices, while more complex or less time-critical workloads are uploaded to the cloud for additional computational support (e.g., \cite{lin2019computation, mach2017mobile, he2024large}).
This cooperative mechanism has been leveraged to accelerate inference for large language models (LLMs) \cite{he2024large}.
To optimize edge computing, various algorithms have been developed to enable collaborative inference across multiple edge devices, effectively utilizing their combined computational capabilities to enhance model performance (e.g., \cite{li2022collaborative, hu2022distributed, shi2021dnn, liu2022hastening}). 
Integrating edge computing with AI and edge-cloud collaboration is highly promising for Internet of Things IoT applications requiring low latency, high bandwidth efficiency, and real-time processing \cite{10.1109/mnet.2018.1700202}.
\textit{Our proposed MoE$^2$ introduces a novel framework for collaborative inference by leveraging the diverse capabilities of LLM experts specialized in handling different tasks, distinguishing it from existing approaches.}\par

\subsection{Mixture-of-Experts (MoE)}
% the concept of MoE & the traditional use of MoE
The concept of MoE was first introduced in \cite{jacobs1991adaptive, jordan1994hierarchical} to dynamically assign inputs to \textit{multiple expert networks} using \textit{a gating network}. Early MoE models improved adaptability and accuracy by dividing tasks into sub-tasks, with each expert specializing in specific regions of the input space. The gating network dynamically selects the most relevant experts, enabling efficient problem-solving.

Modern MoE architectures (e.g., \cite{DBLP:conf/iclr/ShazeerMMDLHD17, DBLP:journals/corr/abs-2006-16668, fedus2022switch,li2024locmoe}), introduced \textit{sparse activation}, where the gating network activates only a subset of experts for each input, enabling models to scale efficiently to trillions of parameters. Building on this foundation, MoE has recently been applied to address the issue of performance degradation when deploying LLMs on edge devices (e.g., \cite{yi2023edgemoe,  shen2024jetmoe}). \textit{Our work extends these principles by designing an edge-based MoE system to balance performance, latency, and energy efficiency.}
% Smaller LLMs act as domain-specific experts deployed on edge servers, with a gating network routing inputs dynamically to ensure efficient resource use and real-time processing.}

\subsection{Edge LLMs}
% many works delve into the combination of llms and edge devices
The superior performance of LLMs has increasingly captured researchers' attention, leading to a growing focus on integrating LLMs with edge devices, as recently surveyed in \cite{10835069}.

% on-device llms(edge4llms)
\textbf{Edges/Networks for LLMs:}
Edge deployment of LLMs offers solutions to the challenges of cloud-based systems, such as high latency, data security concerns, and connectivity limitations. By leveraging edge computing, LLMs can process data locally, improving response times, enhancing privacy, and reducing bandwidth usage, which makes them particularly suitable for latency-sensitive and resource-constrained applications \cite{10835069}. Recent advancements in edge inference techniques for LLMs can be categorized into centralized edge inference, split inference, and collaborative inference. Centralized edge inference reduces communication overhead by optimizing token representations (e.g., \cite{liang2022not, jiang2023llmlingua}) and employing methods such as service placement and migration (e.g., \cite{xu2024cached, ding2024hybrid, fu2024serverlessllm, fang2023large}). Split inference divides tasks between edge devices and servers to balance computation and communication, using strategies like token representation reduction (e.g., \cite{goyal2020power, shao2021learning, chen2023cross}), progressive offloading \cite{lan2022progressive}, early exit \cite{schuster2022confident} and multi-hop architectures \cite{ma2023poster}. Collaborative inference employs speculative decoding (e.g., \cite{leviathan2023fast, wang2023tabi}), where lightweight models on edge devices generate initial predictions that are refined by powerful servers. These techniques collectively enable scalable and resource-efficient edge inference for LLMs. \textit{However, when deploying LLMs at the edge, the trade-offs between performance, latency, and energy efficiency are not always fully considered. In addition, our approach did not require intrusive modifications, i.e., any modifications to edge LLM experts.}

% think about our advantages: 1) multi small LLMs (equal to one large llm?); 2) without intrusive modification

% llms for edge devices(llms4edge)
\textbf{LLMs for Edges/Networks:}
LLMs are transformative in wireless communications, addressing tasks like network configuration, traffic classification, and 6G optimization by reducing human effort and improving efficiency \cite{zhou2024large}. They are increasingly used for telecom-related tasks, such as domain knowledge generation (e.g., \cite{soman2023observations, wang2024grammar}), code generation (e.g., \cite{du2023power, mani2023enhancing, xiang2023toward, thakur2023benchmarking}), and network configuration generation (e.g., \cite{dzeparoska2023llm, wang2023making, mondal2023llms}). LLMs also excel in classification tasks, including network security (e.g., \cite{aghaei2022securebert, ameri2021cybert, yin2020apply, ferrag2024revolutionizing, seyyar2022attack}), text (e.g., \cite{DBLP:conf/globecom/BariahZZMBD23, aftan2023using}), image (e.g., \cite{menon2022visualclassificationdescriptionlarge, DBLP:conf/iccv/PrattCLF23}), and network traffic classification (e.g., \cite{shi2023bfcn, DBLP:conf/www/LinXGLSY22}). In network optimization, LLM-enabled techniques like reinforcement learning (e.g., \cite{DBLP:journals/corr/abs-2309-06687, kwon2023reward, DBLP:conf/iclr/MaLWHBJZFA24}), black-box optimization \cite{guo2023towards}, convex optimization (e.g., \cite{DBLP:journals/corr/abs-2308-12923, DBLP:journals/corr/abs-2310-06116}), and heuristic algorithms (e.g., \cite{DBLP:conf/gecco/PluhacekKKVS23, DBLP:journals/corr/abs-2310-12541}) enhance wireless network management. Additionally, LLMs apply time series models like pre-trained foundation models (e.g., \cite{DBLP:journals/corr/abs-2310-03589, DBLP:conf/icml/DasKSZ24}), frozen pre-trained models (e.g., \cite{DBLP:journals/tkde/XueS24, DBLP:conf/iclr/0005WMCZSCLLPW24}), fine-tuning (e.g., \cite{chang2024llm4ts, DBLP:conf/nips/ZhouNW0023}), and multi-modality approaches \cite{DBLP:journals/corr/abs-2307-10802} to predict trends and demands. Wu \textit{et al.} further pioneer LLMs as foundational models for networking with their NetLLM framework, reducing manual efforts and improving adaptability \cite{wu2024netllm}.
\textit{Our study fundamentally differs from the aforementioned works on ``LLMs for networks'', where the primary objective is to utilize LLMs to optimize edge networks, whereas our focus is on leveraging edge resources to support LLMs.} \par

\textbf{Closely Related Works:} Only a couple of studies are closely related.
Yi \textit{et al.} \cite{yi2023edgemoe} proposed a framework which enables efficient on-device inference with a single LLM on a single edge server, with experts stored in external memory and loaded as needed. This work did not consider collaborative inference by exploiting heterogeneous LLM experts.
% However, the reloading of experts requires external memory and brings extra delay. 
% EdgeShard \cite{zhang2024edgeshard} introduced a method to partition a computation-intensive LLM into affordable shards and deploy them on distributed devices. 
% However, it brings higher latency and energy consumption due to distributed model sharding.
% \textit{In addition, it requires intrusive modifications to the LLMs.} 
A concurrent study by Li \textit{et al.} \cite{li2024theory} proposed a MoE model over edges for continual learning, while it did not consider the deployment of large models under a resource-constrained setting. 

\begin{figure*}[t]
    \centering
    \includegraphics[width=0.8\linewidth]{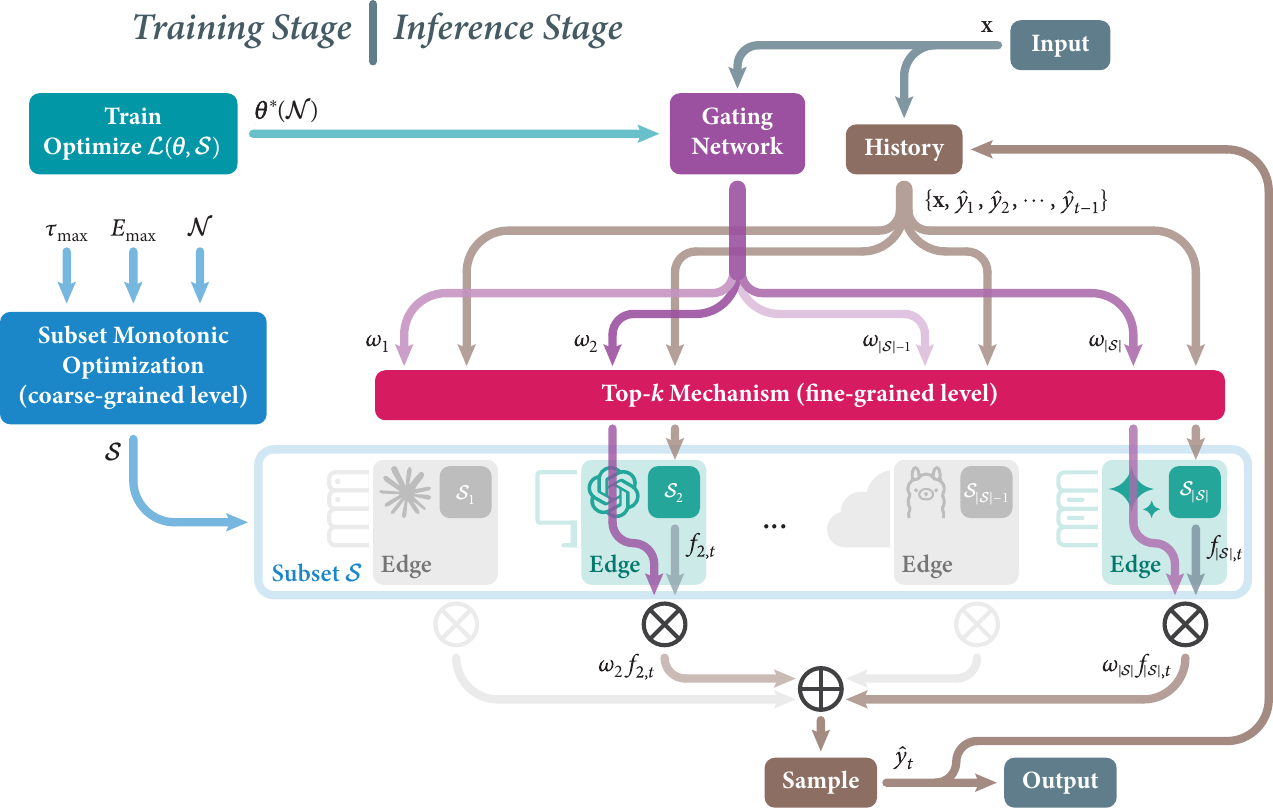}
    \caption{A detailed illustration of the MoE$^2$ framework, which consists of training stage and inference stage. In the training stage, the gating network parameters $\bm{\theta}$ are derived by optimizing $\mathcal L(\boldsymbol \theta, \S)$ over all LLM $\mathcal N$. Then, the subset selection $\S$ are obtained by Subset Monotonic Optimization algorithm to satisfy system constraints. In the inference stage, the gating network computes the gating values for each LLM experts, and Top-$k$ mechanism selects a smaller subset from $\S$ of $k$ LLMs with the highest gating values for responding to prompt $\bx$.}
    \label{fig:framework}
\end{figure*}

\section{System Model and Problem Formulation} \label{Sec:Model}

In this section, we present the system model for edge LLMs, the MoE$^2$ architecture, and the two-level expert selection scheme. We then formulate the joint gating parameter training and LLM expert selection problem.

\subsection{Edge LLM System Overview}

% \edit{We first \rev{introduce} the device and edge server model.}

\textbf{System Overview}. As shown in Fig. \ref{fig:framework}, we consider a system of a set $\mathcal{N}$ of specialized LLM experts (agents). Each LLM expert $n\in\mathcal{N}$ is deployed on one edge server.
% $e_n\in\E$.
 % Each LLM expert $n$ is 
 
\textbf{Task Model.} Mobile users with inference tasks may choose to offload them to edge LLM experts. 
Let $\mathcal{X}$ represent the set of all prompts requested by different users. Users may send prompts $\bx \in \mathcal{X}$ to edge servers based on their specific applications. 
 Future edge LLMs have diverse applications, including mobile health, humanoid robots, virtual assistants, and autonomous driving \cite{10835069}. 
 %  healthcare is one of the most promising applications for LLMs. 
For example, Google’s Med-PaLM 2 is an LLM fine-tuned on medical datasets, designed to provide answers to medical inquiries. Beyond this, healthcare LLMs can assist with tasks such as medical question answering, diagnosis, treatment planning, and medical report generation.

 We consider a set $\mathcal{M}$ representing categories of LLM applications, which constitute $M$ partitions of $\mathcal{X}$. Specifically, we use $\mathcal{X}_m$ to denote the set of prompts (tasks) $\bx$ associated with LLM application $m$. Each LLM application has its own latency requirements, which will be specified later.

\textbf{Autoregressive Process}. The system processes $\bm{x}$ and returns $T$ prediction tokens through the autoregressive process. 
 The length $T$ is decided by the experts through the autoregressive process defined as follows. Starting with the prompt $\bx$, an LLM agent generates output tokens $\hy_t$ at time step $t\in T$ sequentially. The system maintains a history $h(\bx,t)=\{\bx, \hy_1,\cdots,\hy_{t-1}\}$ that combines the original prompt and all previously generated tokens. 
 At time step $t$, an LLM expert $n \in \N$ may processes this history through the LLM output function $f_n$ to generate a probability distribution $f_n(h(\bx,t))\in \mathcal{P}(\mathcal{V})$ for the next token $\hy_{t}$, where $\mathcal{V}$ is the used vocabulary space. The generation of $\hy_{t}$ from $f_n(h(\bx,t))$ will be discussed later.

\textbf{Distributions}. Each prompt $\bx$ is paired with a target answer $\by = [y_t]_{t=1}^T$.
We let $\mathcal{P}_{\bx}$ be the probability distribution of different prompts, requested by the users, and let 
% Let $\mathcal{Y}(\mathcal{X})$ denote the target distribution of $\mathcal{X}$ and
$\mathcal{P}_{\bx,\by}$ denote the joint distribution of $(\bx,\by)$. For each LLM application $m \in \mathcal{M}$, we use $\mathcal{P}_{\bx,m}$ to denote the probability distribution of $\bx$ conditioned on $\bx \in \mathcal{X}_m$.

\subsection{The Mixture-of-Edge-Experts (MoE$^2$) Architecture}

\textbf{MoE$^2$}. The system employs an MoE-like architecture to select a subset of edge LLM experts to generate token predictions dynamically. 
MoE$^2$ consists of two key components: a \textit{gating network} and a \textit{two-level expert selection mechanism}.
% The gating network will be introduced later.
Given a prompt $\bx$, the generated distributions $f_n(h(\bx,t))$ are weighed by the gating network and aggregated at the user end to generate the final token $\hy_t$, which is subsequently added to the history $h(\bx,t)$ for future generation. This iterative process continues, with each new token generation benefiting from the accumulated context of all previous predictions.

% \textbf{LLM Applications:} \textcolor{red}{We will present four killer LLM-powered applications: . }

\subsubsection{Gating Network} Our system implements a distributed architecture where each edge server hosts a dedicated gating network that optimizes the routing weights for the accessible LLM experts. The gating network employs an embedding model to extract textual information and a multi-layer perceptron (MLP) to generate gating values for the LLM experts. Upon  receiving a prompt from a mobile user, the nearest edge server processes the request with its gating network, coordinates the token generation process, and synthesizes the final response for transmission back to the user. Let $\bm{\theta}$ denote the \textit{gating parameters} of the gating network. Given prompt $\mathbf{x}$, the gating network \cite{DBLP:conf/iclr/ShazeerMMDLHD17} outputs: 
\begin{equation}
    \mathbf{g}(\bx,\boldsymbol{\theta}) = [g_n(\bx,\boldsymbol{\theta})]_{n\in\mathcal{N}},
\end{equation}
where $g_n(\bx,\boldsymbol{\theta})$ is the $n$-th element of $\mathbf{g} (\bx,\boldsymbol{\theta})$. 
% For any 
An example gating function $g_n(\bx, \bm{\theta})$ is
% $\bm{\theta}$,  is guaranteed to be a real value between $0$ and $1$, such as when using 
the multilayer perceptron (MLP).
% \footnote{The sigmoid function is defined as $\sigma(x)={1}/({1+e^{-x}})$.}.
Given the gating values, we compute the weights of experts with different expert selections. Given prompt $\bx$, for any select subset of experts $\mathcal{S}\subseteq\mathcal{N}$, we compute the \textit{normalized weights} of $\mathcal{S}$ as:
\begin{equation}
    \bm{\omega}(\bx,\boldsymbol{\theta},\S) 
    =\left[\omega_n(\bx, \bm{\theta}, \S)\right]_{n\in\mathcal{S}}
    = \left[\frac{g_n(\bx,\boldsymbol{\theta})}{\sum\limits_{n'\in \mathcal{S}}g_{n'}(\bx,\boldsymbol{\theta})}\right]_{n\in\mathcal{S}}, \label{softmax}
\end{equation}
% where $\omega_n(\bx, \bm{\theta})$ is the $n$-th element of $\bm{\omega}(\bx,\boldsymbol{\theta},{\S})$.
Then, we can obtain the ensemble prediction by a weighted combination of expert outputs \cite{DBLP:conf/iclr/ShazeerMMDLHD17}. The fused probability distribution of token $k$ is expressed as:
\begin{equation}
    F(\bx, \boldsymbol{\theta}, \mathcal{S}, t) = \sum\limits_{n\in \mathcal{S}} \omega_n(\bx, \bm{\theta},\S) f_n(h(\bx,t)).
\end{equation}
Note that $F(\bx, \boldsymbol{\theta}, \mathcal{S}, t)$ is a probability distribution of $\mathcal{V}$, i.e., $F(\bx, \boldsymbol{\theta}, \mathcal{S}, t) \in \mathcal{P}(\mathcal{V})$. A new token $\hy_{t+1}$ is sampled directly from $F(\bx, \boldsymbol{\theta}, \mathcal{S}, t)$, i.e., $\hy_{t+1} \sim F(\bx, \boldsymbol{\theta}, \mathcal{S}, t)$. Then, $\hy_{t+1}$ will be added to the history information $h(\bx, t+1)$. This process repeats until a stop token is generated.

\subsubsection{Two-Level Expert Selection Mechanism}
One of the key challenges, compared to conventional MoE architectures, lies in the joint selection of LLM experts and the training of gating parameters. Traditional MoEs typically choose experts merely based on gating values,
employ additional routing networks to introduce sparsity. 
These routing networks often add tunable noise and retain only the top $k$ values before applying the softmax function (as in \eqref{softmax}). Although this approach can reduce computational overhead, it cannot be directly deployed in networks with edge LLMs that are heterogeneous across various attributes, including capability, energy consumption, and latency.

To address the above challenge, we propose an LLM expert selection scheme comprising two levels: a \textit{coarse-grained} level and a \textit{fine-grained} level, as illustrated in Fig. \ref{fig:framework}:
\begin{itemize}
    \item The coarse-grained level selects a subset of LLM experts, denoted by $\mathcal{S}$, based on system constraints, such as through optimization-based methods. This selection ensures a worst-case bound on energy consumption and induced latency, accounting for scenarios where all LLM experts in $\mathcal{S}$ are utilized. 
    \item In the fine-grained level, LLM experts are further selected from the subset $\mathcal{S}$ dynamically based on prompts $\bx$ via a routing network, leveraging the heterogeneity of LLM agents in handling different tasks while further reducing system costs. 
\end{itemize}
% These two 

As we will demonstrate later, the two-level expert selection mechanism, combined with a sufficiently sophisticated gating function $\mathbf{g}(\bx, \bm{\theta})$, is crucial for enabling efficient algorithm design to jointly train $\bm{\theta}$ and optimize $\mathcal{S}$.

\subsubsection{Where to Deploy/Train MoE$^2$} Based on our experimental results presented later, a gating network with superior performance can be as small as 1.7GB, enabling its deployment on edge servers (alongside edge LLMs) to serve nearby users effectively. This also opens up the possibility of adopting edge federated learning or cloud-edge collaboration frameworks, where training is conducted in the cloud and deployment occurs at edges.

\subsection{System Constraints}

In reality, LLM applications typically have specific service requirements, which make deploying LLMs at the mobile edge a more efficient solution.
The system incurs costs from two primary sources: system delay and energy consumption.
% With
% These LLM applications, may have the following service requirements, making it better to deploy LLMs at the mobile edge.
% The system incurs costs from two sources: the system delay and the energy consumption of the system.

\subsubsection{System Delay}

The system delay in our system can be categorized into three components: the computational time of the LLM experts, the transmission latency between edge devices and mobile devices, and the ensemble time when aggregating the outputs of LLM experts. For an edge LLM agent $n$, the computation time of prompt $\bx$ can be represented as:
\begin{equation}
    \tau_{n}^{\mathrm{comp}}(\bx) = \frac{C^{\mathrm{token}}_{n}(\bx)}{C_{n}^{\mathrm{cap}}} + \frac{M_{\mathrm{size}}}{b_{n}} + C^{\mathrm{ovh}}_{n},
\end{equation}
where:
\begin{itemize}
    \item $C^{\mathrm{token}}_{n}(\bx)$: Average computational cost per token (in FLOPs) for edge LLM model $n$ for input $\bx$.
    %  (in FLOPs) 
    \item $C_{e}^{\mathrm{cap}}$: Computational capability (in FLOPs per second) of edge server $e$.
    \item $M_{\mathrm{size}}$: Average memory access size (in bytes).
    \item $b_n$: Memory bandwidth (in bytes per second) of edge server $n$.
    \item $C^{\mathrm{ovh}}_e$: Constant overhead for operations, e.g., data transfer between CPU and GPU of edge server $e$, which cannot be overlapped with other operations (in seconds).
\end{itemize}
The transmission delay between LLM agent $n$ and mobile devices can be calculated as:
\begin{equation}
    \tau^{\mathrm{tran}}_{n}(\bx) = \frac{D(\bx)}{R_{n}},
\end{equation}
where $D(\bx)$ is the data size of the transmitted prompt $\bx$ and $R_{n}(t)$ is the achievable data rate between LLM agent $n$ and users. 

For an answer with $T$ tokens, the delay between mobile user $u$ and edge server with LLM model $n$ can be expressed as the followings, where for all $t \in [T]$,
\begin{equation}
\begin{aligned}
    \tau_{n}(\bx,t) =~&\tau^{\mathrm{comp}}_{n}(h(\bx,t)) + \tau_{u}^{\mathrm{tran}}(\hy_{t-1})+ \tau^{\mathrm{tran}}_{n}(p_{n,t} ),
\end{aligned}
\end{equation}
where we set $h(\bx, 0)=\bx$ and $\hat{y}_{0}=\bx$ for simplicity, and define $[T]\triangleq\{1,\cdots,T\}$.
% We  for notational completeness.

We use $\tau(\bx,\S)$ to denote the overall delay for token-level ensemble generation of prompt $x$ with LLM models subset $\S$, given by:
\begin{equation}
\tau(\bx,\S,t) \triangleq \max\limits_{n\in\S} \{ \tau_{n}(\bx,t) \}.
\end{equation}
In other words, the overall delay is determined by the the slowest edge LLM expert within the selected subset $\mathcal{S}$, as the framework cannot generate the next token until all experts have completed their computations.
The long-term system delay can be expressed as ($\tau^{\mathrm{gate}}$ is the delay of gating network computation):
\begin{equation}
    \tau(\bx,\S) = \frac{1}{T}\sum_{t=1}^T \tau(\bx,\S, t)+\tau^{\mathrm{gate}}.
\end{equation}
% Finally, 

Finally, each LLM application $m\in\mathcal{M}$ has its own latency requirement, given by the following constraint:
\begin{equation}
    \mathop{\mathbb{E}}\limits_{\bx\sim \mathcal{P}_{\bx,m}}[\tau(\bx,\S)]\leq\tau_{\max,m}, \quad \forall m\in\mathcal{M},
\end{equation}
% the system delay is constrained as follows:
% \begin{equation}
%     \mathop{\mathbb{E}}\limits_{\bx\sim \mathcal{P}_{\bx}}[\tau(\bx,\S)]\leq\tau_{\max},
% \end{equation}
where $\tau_{\max,m}$ is the delay limit (deadline) in response to arbitrary prompt $\bx$ for LLM application $m$.
\subsubsection{System Energy Consumption}

The energy consumption of our system mainly depends on the computational energy consumption of inference for edge LLM experts. The total energy consumption for token-level ensemble generation is:
\begin{equation}
E(\bx,\S,t) \triangleq \sum_{n\in\S} E_{n}^{\mathrm{comp}}(h(\bx,t)),
\end{equation}
where $E_{n}^{\mathrm{comp}}(h(\bx,t))$ represents the energy consumption to inference input $h(\bx,t)$ for the LLM agent $n$. The long-term system energy consumption can be represented as:
\begin{equation}
    E(\bx, \S)=\frac{1}{T}\sum_{t=1}^T E(\bx,\S,t),
\end{equation}
which is constrained as:
\begin{equation}
    \mathop{\mathbb{E}}\limits_{\bx\sim \mathcal{P}_{\bx}}[E(\bx, \S)]\leq E_{\max},
\end{equation}
where $E_{\max}$ is the energy budget.

\subsection{Problem Formulation}
Our objective is to jointly optimize the (coarse-grained level) expert selection $\mathcal{S}$ and train the gating parameters $\btheta$, considering the performance of LLM agents under the aforementioned system delay constraints and energy consumption constraint. 
% Specifically, we aim to select a subset of experts $\mathcal{S}$ and weight them by the gating network, which is parameterized by $\bm{\theta}$. Therefore, we should optimize both $\S$ and $\bm{\theta}$. 
To measure the quality of the ensemble answer generated by an expert subset $\S$.
For an answer with $T$ tokens generated from $\S$, the loss function of the ensemble prediction is defined as:
\begin{equation}
\begin{aligned}
\mathcal{L}(\boldsymbol{\theta},\mathcal{S},\bx, \by) \triangleq -\sum_{t=1}^{T} \log\left(\sum\limits_{n\in \mathcal{S}}  \omega_n(\bx, \bm{\theta}, \S) f_{n,y_t}(h(\bx,t))\right),
\end{aligned}
\end{equation}
where $f_{n,y_t}(h(\bx,t))$ denotes the probability of $y_t$ in $f_{n}(h(\bx,t))$. The expected loss function of the whole distribution can be denoted as:
\begin{equation}
\label{Eq-loss}
    \mathcal{L}_{\rm exp}(\boldsymbol{\theta},\mathcal{S}) =\mathop{\mathbb{E}}\limits_{(\bx, \by)\sim\mathcal{P}_{\bx,\by}} [\mathcal{L}(\boldsymbol{\theta},\mathcal{S},\bx, \by)].
\end{equation}
Since $\mathcal{P}_{\bx,\by}$ is not directly accessible in practice, we consider to use an empirical loss function instead:
\begin{equation}
\label{Emp-loss}
    \mathcal{L}(\boldsymbol{\theta},\mathcal{S}) = \frac{1}{|\mathcal{D}|}\sum_{(\bx, \by)\in\mathcal{D}} \mathcal{L}(\boldsymbol{\theta},\mathcal{S},\bx, \by),
\end{equation}
where $\mathcal{D}$ is a training dataset.
We now formulate the optimization problem:
\begin{tcolorbox}
    The \textit{joint gating parameter training and LLM expert selection problem} is formulated as:
\begin{equation}
    \begin{aligned}\label{problem}
    \textbf{OP:} \quad \min_{\boldsymbol{\theta}, \mathcal{S}}& \quad  
        \mathcal{L}(\boldsymbol{\theta},\mathcal{S}) \\
        \text{s.t.}&\quad  \mathop{\mathbb{E}}\limits_{\bx\sim \mathcal{P}_{\bx,m}}[\tau(\bx,\S)]\leq\tau_{\max,m},~\forall m\in\mathcal{M}, \\
        &\quad \mathop{\mathbb{E}}\limits_{\bx\sim \mathcal{P}_{\bx}}[E(\bx, \S)]\leq E_{\max}.
    \end{aligned}
\end{equation}
% where $\tau_{\max}$ is the delay limit and $E_{\max}$ is the energy consumption budget.
% The constraints ensure bounded computational delay and resource consumption.
\end{tcolorbox}

This problem is challenging due to the combinatorial nature of the expert subset selection, the non-convexity of the loss function, and the complex interplay between the system delay and energy consumption constraints. For a given input, the system must select the optimal subset of experts to minimize the loss function while satisfying the constraints on delay and energy consumption. Traditional optimization techniques may not be suitable for addressing these competing objectives simultaneously. Therefore, we propose a novel methodology to solve this problem effectively and efficiently.

 \begin{Remark}
We note that $\mathcal{P}_{\bx}$, which is required in the constraints of \eqref{problem}, is also not directly accessible in practice. However, based on our experimental results presented later, both $\tau(\bx,\S)$ and $E(\bx, \S)$ primarily depend on the length of the prompts, and their statistical significance becomes notable only when the prompt length exceeds $1024$  tokens. This implies that $\mathop{\mathbb{E}}\limits_{\bx\sim \mathcal{P}_{\bx,m}}[\tau(\bx,\S)]$ and $\mathop{\mathbb{E}}\limits_{\bx\sim \mathcal{P}_{\bx}}[E(\bx, \S)]$ are much easier to estimate in practice compared to $\mathcal{L}_{\exp}(\boldsymbol{\theta},\mathcal{S})$.
 \end{Remark}

\section{Methodology}
\label{Sec: Methodology}
MoE$^2$ consists of two key stages --- training and inference. In training stage, we solve Problem \eqref{problem} to attain the optimal overall response quality while subject to constraints. While in inference stage, we focus on balancing response quality with system costs.

\subsection{Optimal Solution Structures}
Define the optimal parameters $\bm\theta^*$ when fixing the subset $\S$ of edge LLMs: 
\begin{align}
        \bm{\theta}^{*}({\S}) \triangleq \arg\min\limits_{\bm{\theta}}  {\mathcal{L}}(\bm{\theta},\S),\quad \forall \S\subseteq\N.  \label{Eq: Optimality for subset}
\end{align}
We provide the theoretical analysis of the relation between $\bm{\theta}$ and $\mathcal{S}$. Then the following theorem extends 
\begin{Theorem}[Optimality for subset]
\label{Theorem: Optimality for subset}
Let the gating network $\boldsymbol{g}(\bx,\btheta)$ be an MLP with a sufficiently large width. For any given subset $\S\subseteq\N$ and the optimal parameters satisfy the following condition:
% \begin{align}
%     \bm{\theta}_{\S}^{*} = \arg\min\limits_{\bm{\theta}}  \hat{\mathcal{L}}(\bm{\theta},\S),
% \end{align}
% we have:
\begin{align}
   \frac{g_{n}(\bx, \bm{\theta}^{*}({\S}))}{\sum\limits_{n'\in \S}g_{n'}(\bx, \bm{\theta}^{*}({\S}))}=\frac{g_n(\bx, \bm{\theta}^{*}({\N}))}{\sum\limits_{n'\in \S}g_{n'}(\bx, \bm{\theta}^{*}({\N}))}, \forall n\in\S.
\end{align}
\end{Theorem}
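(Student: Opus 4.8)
The plan is to lift the problem from parameter space to function space, decouple it across prompts, and then compare the pointwise optimizers for $\S$ and for $\N$. First I would use the universal approximation property granted by the width-unbounded MLP. Since $\mathcal{L}(\bm{\theta},\S)$ depends on $\bm{\theta}$ only through the induced normalized weights $\{\omega_n(\bx,\bm{\theta},\S)\}_{n\in\S}$ evaluated on the finite training set $\D$, and a sufficiently wide MLP can realize any continuous map $\bx\mapsto(g_n(\bx))_{n\in\N}$ to arbitrary accuracy, minimizing $\mathcal{L}(\bm{\theta},\S)$ over $\bm{\theta}$ becomes, in the large-width limit, a minimization directly over weight functions $\bx\mapsto\bm{\omega}(\bx)$ valued in the probability simplex on $\S$. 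Thus $\bm{\theta}^{*}(\S)$ realizes the optimal such weight function, and likewise $\bm{\theta}^{*}(\N)$ realizes the optimal weight function over the simplex on $\N$.

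Second, I would exploit that $\mathcal{L}(\bm{\theta},\S)=\frac{1}{|\D|}\sum_{(\bx,\by)\in\D}\mathcal{L}(\bm{\theta},\S,\bx,\by)$ is additive over data points, while the expressiveness from the first step lets the weight vector at each distinct prompt $\bx$ be chosen independently. Hence the minimization decouples: for each $\bx$ the optimal weights solve $\min_{\bm{\omega}}\sum_{\by}\mathcal{L}(\bm{\omega},\S,\bx,\by)$ over the simplex on $\S$. Writing $a_{n,t}=f_{n,y_t}(h(\bx,t))>0$, each per-prompt objective is a sum of $-\log$ of positive linear forms in $\bm{\omega}$ and is therefore convex, with stationarity conditions $\sum_{t}a_{m,t}/(\sum_{n\in\S}\omega_n a_{n,t})=T$ for every $m$ in the support; here $\sum_{n\in\S}\omega_n a_{n,t}$ is exactly the $y_t$-coordinate of the fused distribution $F(\bx,\bm{\theta},\S,t)$. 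The analogous conditions hold over $\N$ for the full-set problem.

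Third---and I expect this to be the crux---I would prove that renormalizing the full-set optimizer over $\S$ reproduces the subset optimizer. Taking pairwise differences of the stationarity conditions for $m,m'\in\S$ removes the common normalizer and yields, for the subset problem, $\sum_{t}(a_{m,t}-a_{m',t})/[F(\bx,\bm{\theta}^{*}(\S),\S,t)]_{y_t}=0$, and for the full-set problem the same identity with $F(\bx,\bm{\theta}^{*}(\N),\N,t)$ in place of the subset-fused distribution. The difficulty is that these two fused distributions differ: the full-set denominator mixes in the experts outside $\S$, so the two conditions are not literally identical, and this denominator coupling is the real obstacle. The route I would pursue is to use the scale-invariance of $\omega_n(\bx,\bm{\theta},\S)$ in the raw gating values---the right-hand side of the claimed identity is unchanged when $(g_n)_{n\in\S}$ is multiplied by any positive factor---to reduce both problems to a common fixed-point system in the ratios $\{g_n/g_{n'}\}_{n,n'\in\S}$, and then to argue, invoking the large-width freedom, that at the full-set optimum the out-of-subset experts carry weight only in a way that does not perturb the relative first-order balance among $\S$, so that $\{g_n(\bx,\bm{\theta}^{*}(\N))\}_{n\in\S}$ and $\{g_n(\bx,\bm{\theta}^{*}(\S))\}_{n\in\S}$ solve the same system up to scale.

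Finally, because each per-prompt objective is strictly convex in the induced mixture on its supporting face, the minimizing weight vector is unique; the equality established at the level of optimal weight functions therefore transfers directly to the parameter level, giving $\omega_n(\bx,\bm{\theta}^{*}(\S),\S)=\omega_n(\bx,\bm{\theta}^{*}(\N),\S)$ for all $n\in\S$ and all $\bx$, which is precisely the stated identity.
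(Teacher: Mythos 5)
Your setup through step two is sound, and it is in fact a more transparent route than the paper's: the paper argues by contradiction, using universal approximation to splice together a hybrid gating function (the subset-optimal values rescaled on $\S$, the full-set-optimal values kept on $\N\setminus\S$) and then comparing the resulting subset and full-set losses, whereas you reduce the large-width limit directly to an independent per-prompt minimization of $\sum_{t}-\log\bigl(\sum_{n}\omega_n a_{n,t}\bigr)$ over the simplex. The problem is your step three, which is where the proposal stops being a proof: the assertion that ``the out-of-subset experts carry weight only in a way that does not perturb the relative first-order balance among $\S$'' is precisely the theorem restated, and no argument is given for it. ``Invoking the large-width freedom'' cannot supply one, because width only controls which weight functions are realizable, not which realizable weight functions are stationary for the two different objectives.

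Moreover, your own reduction makes it possible to see that this step cannot be completed in general. Take a single prompt with $T=3$, $\N=\{1,2,3\}$, $\S=\{1,2\}$, and correct-token probabilities $a_1=(0.6,0.2,0.2)$, $a_2=(0.2,0.6,0.2)$, $a_3=(0.25,0.15,0.6)$. The subset problem is symmetric in experts $1$ and $2$, with unique optimum $(\tfrac12,\tfrac12)$. The full-set problem has the unique interior optimum $\omega=(\tfrac{0.875}{3},\tfrac{1.125}{3},\tfrac{1}{3})$: one checks that $p_t=\tfrac13$ for every $t$ and hence $\sum_t a_{n,t}/p_t=3$ for all three experts, so the KKT conditions hold. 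Restricting this optimum to $\S$ and renormalizing gives $(0.4375,0.5625)\neq(\tfrac12,\tfrac12)$. So the denominator coupling you flagged as ``the real obstacle'' is not a technical nuisance that a cleverer fixed-point argument will dissolve: the presence of expert $3$ genuinely shifts the optimal ratio between experts $1$ and $2$. Any completion of your argument would need additional structural hypotheses on the experts' outputs under which the two stationarity systems coincide; as written, step three fails and the chain of reasoning does not reach the stated identity.
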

The key idea of proving Theorem \ref{Theorem: Optimality for subset} is conducted by contradiction, constructing a parameter $\bm{\theta}'$ through universal approximation that satisfies the required conditions, ultimately leading to a contradiction.
For a detailed proof, please refer to Appendix \ref{AppenProof}. 
\begin{Remark}
Theorem \ref{Theorem: Optimality for subset} indicates that the optimal gating values $\bm{\theta}^*(\S)$ for any subset $\S$ can be derived directly from the optimal gating values of the complete set $\N$. Consequently, once the optimal parameter $\bm{\theta}^*(\N)$ for the entire set $\N$ of edge LLMs is obtained, the optimal gating values for any subset $\S$ can be determined without the need for additional training.
\end{Remark}
 Problem \eqref{problem} can be reduced to a combinatorial optimization over experts as:
\begin{equation} 
\label{Problem: Reduced Combinatorial Optimization Problem}
    \begin{aligned}
        \min_{\S}& \quad  \mathcal{L}(\btheta^*({\S}),\S)\\
        \text{s.t.}&\quad  \mathop{\mathbb{E}}\limits_{\bx\sim \mathcal{P}_{\bx,m}}[\tau(\bx,\S)]\leq\tau_{\max,m}, \quad \forall m\in\mathcal{M}, \\
        &\quad \mathop{\mathbb{E}}\limits_{\bx\sim \mathcal{P}_{\bx}}[E(\bx, \S)]\leq E_{\max},
    \end{aligned}
\end{equation}
Note that when searching for the optimal $\S^*$, Theorem \ref{Theorem: Optimality for subset} suggests that it is not necessary to repeatedly train the gating networks to obtain $\btheta^*({\S})$ for each $\S$. This significantly reduces the complexity of the original problem.

Now we aim to optimize subsets $\S$ to satisfy system constraints. We first introduce the property of monotonicity improvement of subsets as follows:
\begin{Theorem}[Monotonic Improvement]
    \label{Theorem: Monotonic Improvement} 
    Let the gating network $\boldsymbol{g}(\bx,\btheta)$ be an MLP with a sufficiently large width.
    For any pair of subsets $\S,\S'\subseteq\N$ such that $\S'\subseteq\S$, we have 
    \begin{align}
      \min_{\boldsymbol{\theta}} {\mathcal{L}}(\boldsymbol{\theta},\S)\leq \min_{\boldsymbol{\theta}} {\mathcal{L}}(\boldsymbol{\theta},\S').
    \end{align}
\end{Theorem}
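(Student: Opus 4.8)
The plan is to show that the richer expert set $\S$ can reproduce, or approximate arbitrarily well, the ensemble built from the optimal solution on the smaller set $\S'$; hence the minimum loss attainable with $\S$ cannot exceed that attainable with $\S'$. The mechanism is that shrinking the gating values $g_n(\bx,\btheta)$ of the extra experts $n\in\S\setminus\S'$ toward zero collapses the normalized weights of \eqref{softmax} over $\S$ onto the normalized weights over $\S'$, recovering the smaller ensemble as a limiting case.

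First I would fix $\btheta^\star\triangleq\btheta^*(\S')$, the minimizer attaining $\min_{\btheta}\L(\btheta,\S')=\L(\btheta^\star,\S')$. I would then invoke the universal-approximation property of the sufficiently-wide MLP (the same device used in the proof of Theorem~\ref{Theorem: Optimality for subset}) to construct, for any tolerance $\varepsilon>0$, a parameter $\btheta'$ whose gating outputs over the compact prompt support satisfy two requirements: (i) $g_n(\bx,\btheta')\approx g_n(\bx,\btheta^\star)$ for every retained expert $n\in\S'$, so the relative weights among $\S'$ are preserved; and (ii) $g_n(\bx,\btheta')$ is uniformly small for every extra expert $n\in\S\setminus\S'$. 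Both conditions describe a continuous target map of $\bx$ (item (i) being itself an MLP output), so it is realizable up to arbitrary accuracy by a wide enough network.

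Next I would track the effect on the normalization. As $\max_{n\in\S\setminus\S'}g_n(\bx,\btheta')\to 0$, the denominator $\sum_{n'\in\S}g_{n'}(\bx,\btheta')$ in \eqref{softmax} converges to $\sum_{n'\in\S'}g_{n'}(\bx,\btheta')$; hence $\omega_n(\bx,\btheta',\S)\to\omega_n(\bx,\btheta^\star,\S')$ for $n\in\S'$ while $\omega_n(\bx,\btheta',\S)\to 0$ for $n\in\S\setminus\S'$. Because the fused token distribution is a finite convex combination of the probability vectors $f_n(h(\bx,t))$, the scalar $\sum_{n\in\S}\omega_n(\bx,\btheta',\S)f_{n,y_t}(h(\bx,t))$ converges to $\sum_{n\in\S'}\omega_n(\bx,\btheta^\star,\S')f_{n,y_t}(h(\bx,t))$. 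The latter is bounded away from $0$ (otherwise $\L(\btheta^\star,\S')$ would be infinite), so each $\log$ term is continuous at the limit, and since $\L(\btheta,\S)$ in \eqref{Emp-loss} is a finite average of such terms over $\D$, I obtain $\L(\btheta',\S)\to\L(\btheta^\star,\S')$.

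Finally, since $\min_{\btheta}\L(\btheta,\S)\le\L(\btheta',\S)$ for every approximant $\btheta'$, letting $\varepsilon\to0$ yields $\min_{\btheta}\L(\btheta,\S)\le\L(\btheta^\star,\S')=\min_{\btheta}\L(\btheta,\S')$, which is the claim. The main obstacle I anticipate is the approximation step: I must certify that a \emph{single} $\btheta'$ can simultaneously match the $\S'$-pattern and suppress the $\S\setminus\S'$ outputs \emph{uniformly} over the prompt support, and that positivity of the gating values is maintained throughout the limit so that the normalization in \eqref{softmax} and the logarithm in the loss remain well defined.
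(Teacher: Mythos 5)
Your proposal is correct and follows essentially the same route as the paper: both construct, via the universal approximation theorem, a gating parameter for the larger set $\S$ whose outputs reproduce the optimal normalized weights on $\S'$ while suppressing the experts in $\S\setminus\S'$, and then bound $\min_{\btheta}\L(\btheta,\S)$ by the resulting loss. The only difference is cosmetic — the paper's target function sets the extra experts' gating values exactly to zero and asserts the resulting loss equality directly, whereas you drive them to zero and take an explicit limit, which is in fact a more careful treatment of the approximation step.
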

\begin{proof}
Consider the following function $\boldsymbol{f}: \mathcal{X}\rightarrow \mathbb{R}^{|\mathcal{N}|}$, given by
\begin{equation}
\begin{aligned}
    {f}_n(\bx) =
    \begin{cases}
        \frac{g_n(\bx,{\btheta}^*(\S'))}{\sum\limits_{n'\in \S'} g_{n'}(\bx, \bm{\theta}^*({\S'}))}, & \text{if } n \in \S', \\
        0, & \text{if } n \in \S \setminus \S',
    \end{cases}
\end{aligned}
\end{equation}
where $\boldsymbol{h}(\bx)=[{h}_n(\bx)]_{n\in\mathcal{N}}$. Since $\boldsymbol{h}(\bx)$ has at most countably many discontinuities, it is Borel measurable
\cite{good1958mathematical}. According to the universal approximation theorem \cite{hornik1989multilayer}, there exists $\hat{\bm{\theta}}$ such that the MLP $\bg(\bx, \hat{\bm{\theta}})$ with a sufficiently large width can approximate $\boldsymbol{h}(\bx)$ to any degree of accuracy.
% \begin{align}
    
% \end{align}

Then, we have:
\begin{equation}
\begin{aligned}
    {\mathcal{L}}(\btheta^*({\S}),\S)
    \leq {\mathcal{L}}(\hat{\bm{\theta}},\S)
    ={\mathcal{L}}(\btheta^*(\S'),\S').
\end{aligned}
\end{equation}
% where $g_n(\bx, \bm{\theta}^*({\S}))$ is the $n$-th value of $\bg_{\S}(\bx,\hat{\bm{\theta}})$.
\end{proof}

\subsection{Algorithm Design}
% \textcolor{red}{I suggest to write three algorithms. One for training $\theta^*$; one for monotonic optimization; and then one for the inference stage. This way, it would be more clear to see that these two processes are separable.}

We present the algorithm for both training and inference stages. 
\begin{itemize}
    \item The training stage focuses on optimizing the gating network parameters, $\bm{\theta}$, and subset selection to enhance system performance while adhering to system constraints.
As suggested by Theorems \ref{Theorem: Optimality for subset} and \ref{Theorem: Monotonic Improvement}, the training process for $\bm{\theta}$ can be decoupled from the expert selection process. For the coarse-grained expert selection problem in \eqref{problem}, we propose utilizing the discrete monotonic optimization algorithm \cite{minoux2002discrete}.
\item In the inference stage, the goal is to perform fine-grained subset selection for each query, $\bx$, to fully exploit the heterogeneous capabilities of edge LLM experts. The structure of the proposed algorithm is illustrated in Fig. \ref{fig:framework}.
\end{itemize}

\subsubsection{Training Stage}
We first describe the \textit{training stage} as follows. As shown in Fig. \ref{fig:framework}, the training stage consists of two sequential steps. First, we utilize Algorithm \ref{Algorithm: Gating Network Optimization} to obtain parameter $\btheta^*(\mathcal{N})$ for all LLM agents $\N$, based on  the stochastic gradient descent method. 
\begin{algorithm}[t] 
\caption{Gating Network Training}
\label{Algorithm: Gating Network Optimization}
\begin{algorithmic}[1]
\STATE \textbf{Input:} Dataset $\mathcal{D}$ and Batch size $B$\;
\STATE \textbf{Output:} Optimized parameter $\bm{\theta}^*$.
% Parameter $\bm{\theta}$
\WHILE{not converged}
\STATE Sample a batch $\{\bx_i,\by_i\}_{i\in B}$ from $\mathcal{D}$.
\FOR{$i= 1, \dots, B$}
    \STATE Init $t=0$ and $h(\bx_i,t)=\{\bx_i\}$.
    \WHILE{true}
        \STATE Generate gating values $\bg(\bx_i, \bm{\theta})$ using the gating network.
        \STATE Compute normalized weights for all agents $\bm{\omega}(\bx_i, \bm{\theta}, \N) \gets \left[{g_n(\bx_i,\boldsymbol{\theta})}\Big/{\sum\limits_{n'\in \mathcal{\N}}g_{n'}(\bx_i,\boldsymbol{\theta})}\right]_{n\in\N}$.
        \STATE Send $\bx_i$ to all LLM experts and receive $\{f_n(h(\bx_i, t)\}_{n\in\N}$.
        \STATE Fuse the results: \\
        $F(\bx_i,\bm{\theta},\N,t) \gets \sum\limits_{n\in \mathcal{S}} \omega_n(\bx_i, \bm{\theta},\N) f_n(h(\bx_i,t))$.
        \STATE Sample next token $\hat{y}_{i,t+1}\sim F(\bx_i,\bm{\theta},\N,t)$.
        \STATE Add $\hat{y}_{i,t+1}$ to history: \\
        $h(\bx,t+1)\gets\{h(\bx,t),\hat{y}_{i,t+1}\}$,
        \STATE Accumulate loss: $\mathcal{L}(\boldsymbol{\theta},\mathcal{N},\bx_i, \by_i) \gets - \log\left(\sum\limits_{n\in \mathcal{S}}  \omega_n(\bx_i, \bm{\theta}, \S) f_{n,y_{i,t}}(h(\bx_i,t))\right)$.
        \IF{$\hat{y}_{i,t+1}$ is a stop token}
            \STATE \textbf{break}
        \ENDIF
    \ENDWHILE
    \STATE Accumulate batch loss $\mathcal{L}(\boldsymbol{\theta},\N) \gets \mathcal{L}(\boldsymbol{\theta},\N)+\frac{1}{|\mathcal{D}|}\mathcal{L}(\boldsymbol{\theta},\mathcal{N},\bx_i, \by_i)$.
\ENDFOR
\STATE Update $\bm{\theta}$: $\bm{\theta}\gets\bm{\theta}+\alpha\partial\mathcal{L}(\boldsymbol{\theta},\N)/\partial\bm{\theta}$.
\ENDWHILE
\end{algorithmic}
\end{algorithm}

% \textit{Training $\btheta^*(\mathcal{N})$}.

% Then, we optimize the subsets to be selected with optimized gating values to satisfy system constraints. From Theorem \ref{Theorem: Monotonic Improvement}, we know that the optimization of the subset selection has the property of monotonic improvement. Therefore, we can employ discrete monotonic optimization to address Problem \eqref{Problem: Reduced Combinatorial Optimization Problem}. 

% \textit{Discrete Monotonic Optimization for LLM Expert Selection}.
From Theorem \ref{Theorem: Monotonic Improvement}, we employ discrete monotonic optimization to address Problem \eqref{Problem: Reduced Combinatorial Optimization Problem}. 
 We define $G_1$ as the normal hull of the set defined by:
\begin{align}
G_1 \triangleq  \left\{ \mathcal{S}:   \mathop{\mathbb{E}}\limits_{\bx\sim \mathcal{P}_{\bx,m}}[\tau(\bx,\S)]\leq\tau_{\max,m}, ~\forall m\in\mathcal{M} \right\},
\end{align}
and $G_2$ as the normal hull of the set defined by:
\begin{align}
 G_2 \triangleq  \left\{\mathcal{S}: \mathop{\mathbb{E}}\limits_{\bx\sim \mathcal{P}_{\bx}}[E(\mathbf{x},\mathcal{S})]\leq E_{\max}\right\}.
\end{align}
Taking $G$ as their union, i.e., $G=G_1\cup G_2$, we then define
\begin{align}
\pi_G(\mathcal{S}) \triangleq \lambda \mathcal{S}, \text{ where } \lambda = \max\{\alpha > 0 \mid \alpha \mathcal{S} \in G\}.
\end{align}

Now we propose \textit{Subset Monotonic Optimization (SMO)}, as presented in Algorithm \ref{Algorithm: SMO}.
\begin{algorithm}[t] 
\caption{Subset Monotonic Optimization (SMO)}
\label{Algorithm: SMO}
\begin{algorithmic}[1]
\STATE \textbf{Initialization:} Select tolerance $\varepsilon \geq 0$ and $k \gets 1$ Let $CBV = -\infty$ and $T_1=\{\N\}$.\; 
\WHILE{True}
    \STATE \textbf{Step 1:}
    \STATE $\tilde{T}_k$ $\gets$ Remove all $\S$ from $T_k$ such that $ \mathcal{L}(\boldsymbol{\theta},\mathcal{S}) \geq CBV - \varepsilon$.

    \STATE \textbf{Step 2:}
    \IF{$\tilde{T}_k = \emptyset$}
        \IF{$CBV = -\infty$}
            \RETURN Problem \eqref{Problem: Reduced Combinatorial Optimization Problem} is infeasible
        \ELSE
            \RETURN Current best feasible solution $\bar{\S}_k$ as an $\varepsilon$-optimal solution.
        \ENDIF
    \ENDIF

    \STATE \textbf{Step 3:}
    \STATE Select $\S_k \in \arg\max\{ \mathcal{L}(\boldsymbol{\theta},\mathcal{S}) | \S \in \tilde{T}_k\}$
    \STATE Compute $\S'_k = \pi_G(\S_k)$ 
    \IF{$\S'_k = \S_k$}
        \RETURN $\S_k$ as an optimal solution.
    \ELSE
        \STATE Update new best feasible solution as $\S'_{k+1}$ and $CBV= \mathcal{L}(\boldsymbol{\theta},\S'_{k+1})$. 
    \ENDIF
    
    \STATE \textbf{Step 4:}
    \STATE $V_{k+1} \gets (\tilde{T}_k \setminus \{\S_k\}) \cup \{\S_k - (\S_{k}^i - \S_{k}^{'i})e^i, i = 1,\ldots,n\}$
    \FOR{$\S \in T_k \setminus\{\S_k\}$}
        \FOR{$i = 1$ to $n$}
            \IF{$\S\geq\S'_k$ while $\S_i <\S_k$}
                \STATE Remove $\S_k^i$ from $V_{k+1}$.
            \ENDIF
        \ENDFOR
    \ENDFOR
    \STATE $T_{k+1} \gets V_{k+1}$

    \STATE \textbf{Step 5:}
    \STATE Set $k \gets k + 1$.
\ENDWHILE
\end{algorithmic}
\end{algorithm}
Let $CBV$ denote the current best value, $e^i$ denote the $i$-th unit vector of the considered space, and $T_k$ be the proper vertex set of $\N$. First, we set the tolerance $\varepsilon$ and initialize the iteration counter $k$. The initial vertex set $T_1$ is initialized as the complete set of experts $\N$. In each iteration, we remove the subsets with loss values greater than the current best value $CBV - \varepsilon$ from the vertex set $T_k$. We then compute the remaining set $\tilde{T}_k$. If $\tilde{T}_k$ is empty, we check if the problem is infeasible and return the current best feasible solution $\bar{\S}_k$. Otherwise, we select the subset $\S_k$ with the maximum loss value from the remaining set $\tilde{T}_k$. We then compute the projection $\S'_k$ of $\S_k$ onto the feasible region $G$. If $\S'_k$ is equal to $\S_k$, we return $\S_k$ as the optimal solution. Otherwise, we update the current best feasible solution as $\S'_{k+1}$ and $CBV$ as the loss value of $\S'_{k+1}$. We then update the vertex set $T_{k+1}$ by removing $\S_k$ and adding the new subsets generated by the vertex $\S_k$.

\subsubsection{Inference Stage}
Now we describe the \textit{inference stage}. In this stage, we aim to compute the optimal subset selection for each query and generate the answer, which is represented in Algorithm \ref{Algorithm: Inference Stage}. 
\begin{algorithm}[t] 
\caption{Inference Stage}
\label{Algorithm: Inference Stage}
\begin{algorithmic}[1]
\STATE \textbf{Input:} Prompt $\bx$ and Optimal subset $\S$ and gating parameters $\bm{\theta}^*$.
\STATE \textbf{Output:} Answer $\hat{\bm{y}}$
\STATE Calculate gating values $g_{s_{n}}(\bx,\boldsymbol{\theta}^*)$ for $\S=\{s_1, s_2, \dots,s_n, \dots\}$.
\STATE Sort $\S$ according to the gating values: $\hat{\S}(\bx,\bm{\theta}^*)=\{s_n|g_{s_{n-1}}(\bx,\boldsymbol{\theta}^*) \geq g_{s_{n}}(\bx,\boldsymbol{\theta}^*)\}$.
\STATE Select Top-$k$ experts $\Gamma(\bx, \bm{\theta}^*, \S, k) \gets \{ \hat{s}_1, \hat{s}_2, \dots, \hat{s}_k \}$ where $\hat{s}_k\in \hat{\S}(\bx,\bm{\theta}^*)$.
\STATE Compute weights for the selected experts: $\bomega_{n}(\bx, \bm{\theta}^*, \S, k) \gets {g_n(\bx,\boldsymbol{\theta}^*)}\big/{\sum\limits_{n'\in \Gamma(\bx, \bm{\theta}^*, \S, k)}g_{n'}(\bx,\boldsymbol{\theta}^*)}$.
\STATE Init $t\gets0$.
\WHILE{true}
\STATE Fuse results: $F(\bx, \boldsymbol{\theta}^*, \mathcal{S}, k, t) \gets $ \\
\quad $\sum\limits_{n\in \Gamma(\bx, \bm{\theta}^*, \S, k)} \bm{\omega}_n(\bx, \bm{\theta}^*,\S, k) f_n(h(\bx,t))$.
\STATE Sample $\hy_{t+1}\sim F(\bx, \boldsymbol{\theta}^*, \mathcal{S}, k, t)$.
\STATE Collect answer $\hat{\bm{y}} \gets \hy_{t}$.
\IF{$\hy_{t}$ is a stop token}
\STATE \textbf{return}
\ENDIF
\ENDWHILE
\end{algorithmic}
\end{algorithm}
From Theorem \ref{Theorem: Optimality for subset}, we know that the optimal gating values for the subset can be directly obtained from the global optimal gating values, which corresponds to the first step of the inference stage. 

\textbf{Top-$k$ Mechanism.}
To further reduce system costs while preserving response quality, we employ the Top-$k$ mechanism \cite{DBLP:conf/iclr/ShazeerMMDLHD17} to select a subset of $k$ LLM experts with the highest gating values from subset $\S$. Specifically, given optimized parameter $\bm{\theta}^*$ and prompt $\bx$, we first sort all elements in subset $\S$ in ascending order with respect to gating values:
\begin{equation}
\begin{aligned}
    \hat{\S}(\bx,\bm{\theta}^*)=&\{s_n \mid g_{s_{n-1}}(\bx,\boldsymbol{\theta}^*) \geq g_{s_{n}}(\bx,\boldsymbol{\theta}^*)\}, \\
    &n\in\{1, 2, \dots,|\S|\},
\end{aligned}
\end{equation}
where $s_n\in\S$ is the $n$-th element of $\S$. Then the selected Top-$k$ experts can be denoted as: 
\begin{equation}
\Gamma(\bx, \bm{\theta}^*, \S, k) = \{ \hat{s}_1, \hat{s}_2, \dots, \hat{s}_k \}, s_k\in \hat{\S}(\bx,\bm{\theta}^*).
\end{equation}
Then we can compute the optimal weights for these $k$ LLM experts. For each expert $n\in \Gamma(\bx, \bm{\theta}^*, \S, k)$ , we have:
\begin{align}
    \bomega_{n}(\bx, \bm{\theta}^*, \S, k) = \frac{g_n(\bx,\boldsymbol{\theta}^*)}{\sum\limits_{n'\in \Gamma(\bx, \bm{\theta}^*, \S, k)}g_{n'}(\bx,\boldsymbol{\theta}^*)}.
\end{align}
To generate an answer, we fuse the outputs of the selected experts by a weighted combination as:
\begin{equation}
    F(\bx, \boldsymbol{\theta}^*, \mathcal{S}, k, t) = \sum\limits_{n\in \Gamma(\bx, \bm{\theta}^*, \S, k)} \omega_n(\bx, \bm{\theta}^*,\S, k) f_n(h(\bx,t)).
\end{equation}
Then we sample the next token $\hy_{t+1}$ from $F(\bx, \boldsymbol{\theta}^*, \mathcal{S}, k, t)$, i.e., $\hy_{t+1} \sim F(\bx, \boldsymbol{\theta}^*, \mathcal{S}, k, t)$. The process repeats until a stop token is generated.

\section{Simulation Results}
\label{Sec: Simulation Results}
In this section, we deploy MoE$^2$ on local devices and test the model performance. We use a novel cluster-based method to create domain experts, which exploits the strength of the MoE mechanism.
\subsection{Cluster-based Domain Experts}
To utilize the strengths of the MoE mechanism, experts are expected to be diverse and specialized in different tasks. However, it is challenging to acquire domain-specific experts due to the following reasons:
\begin{itemize}
    \item \textbf{Comprehensive LLM training:} Most existing LLM models are trained over diverse datasets, making it difficult to identify domain-specific experts. Although some models are fine-tuned on math or code-related tasks, they may not be sufficient for all tasks.
    \item \textbf{Undistinguishable domains:} The domain of a given prompt is not always clear, and the prompt may contain multiple domains. For example, should we consider a math question and a physics question as two separate domains or as a single domain?
    \item \textbf{Limited domain-specific datasets:} It is challenging to collect sufficient domain-specific data to train an expert even if the domain is distinguishable for prompts. To train a domain-specific expert, we need to label a large number of samples, which is time-consuming and expensive.
\end{itemize}
Therefore, we propose a cluster-based method to create domain-specific datasets from any existing datasets without manual labeling. Our main idea is to cluster the prompts based on their embedding similarity and fine-tune the LLM model on each cluster to create domain-specific experts. Specifically, we use the pre-trained mGTE model \cite{zhang2024mgte} to encode the prompts into embeddings, and apply $K$-means algorithm to cluster prompts based on their embeddings. In our simulation, we set $K=8$ for 8 clusters. Finally, we fine-tune the LLM model on each cluster to create domain-specific experts. Other details can be found in Appendix \ref{Appendix: Env}.

\subsection{Environment Setup}
We evaluate the performance of MoE$^2$ in a simulated environment. We construct a system with $N=8$ edge servers, each hosting an LLM agent. We pick Qwen2.5-3B-Instruct \cite{qwen2025qwen25technicalreport}, Qwen2.5-7B-Instruct \cite{qwen2025qwen25technicalreport}, Llama-3.2-3B-Instruct \cite{dubey2024llama} and Mistral-7B-Instruct-v0.3 \cite{jiang2023mistral} as base models. These LLM agents are duplicated and fine-tuned on each cluster to create domain-specific experts. The gating network is implemented as an MLP with two hidden layers. 

We choose MMLU \cite{hendrycks2020measuring} as the evaluation dataset, which is not involved in the fine-tuning process of the LLMs. More details on the fine-tuning process can be found in Appendix \ref{Appendix: Env}.
We retain $80\%$ data in MMLU as the training set for training the gating network, while the other $20\%$ are used as the testing set for the evaluating MoE$^2$'s performance.
Other detailed settings are listed in the Appendix \ref{Appendix: Env}.

\subsection{Simulation Results}
Here, we present the numerical results of MoE$^2$. We first evaluate the responsive accuracy of our model compared to the following baselines:
\begin{itemize}
    \item \textbf{Single Agent (S. A.):} This method evaluates each original model.
    \item \textbf{Majority Voting (M. V.):} This method assigns equal weights to all fine-tuned LLM agents and fuses the outputs by majority voting. The gating network is not used in this method.
    \item \textbf{Average Expert Accuracy (A. E. A.):} This metric calculates the average accuracy of each fine-tuned LLM agent on the test set. It reflects the average performance of experts.
\end{itemize}

Each model's responsive accuarcy on the MMLU dataset are shown in Fig. \ref{Fig: Accuracy on MMLU dataset}.

\begin{figure}
    \centering
    \includegraphics[width=1\linewidth]{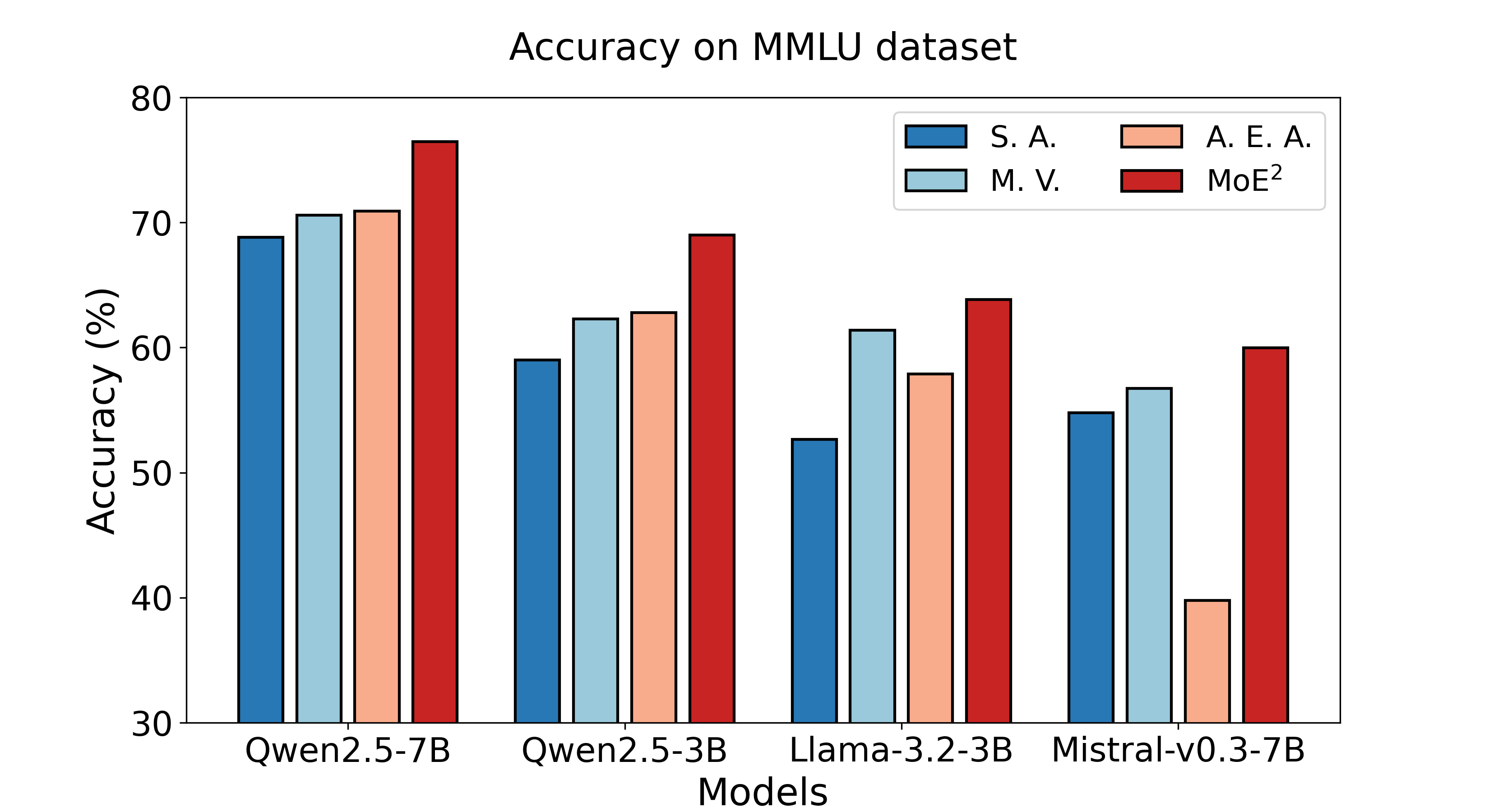}
    \caption{Performance comparison on the MMLU dataset. The MoE$^2$ architecture achieves the highest accuracy, which outperforms the Single Agent, Majority Voting and A. E. A. by $14.4\%$, $7.3\%$, $16.4\%$, respectively.}
    \label{Fig: Accuracy on MMLU dataset}
\end{figure}

We observe that the MoE$^2$ architecture achieves the highest accuracy, outperforming the Single Agent, Majority Voting, and A.E.A. by 14.4\%, 7.3\%, and 16.4\%, respectively. This validates that MoE$^2$ significantly enhances the performance of various LLM models.

\begin{figure}
    \centering
    \includegraphics[width=0.8\linewidth]{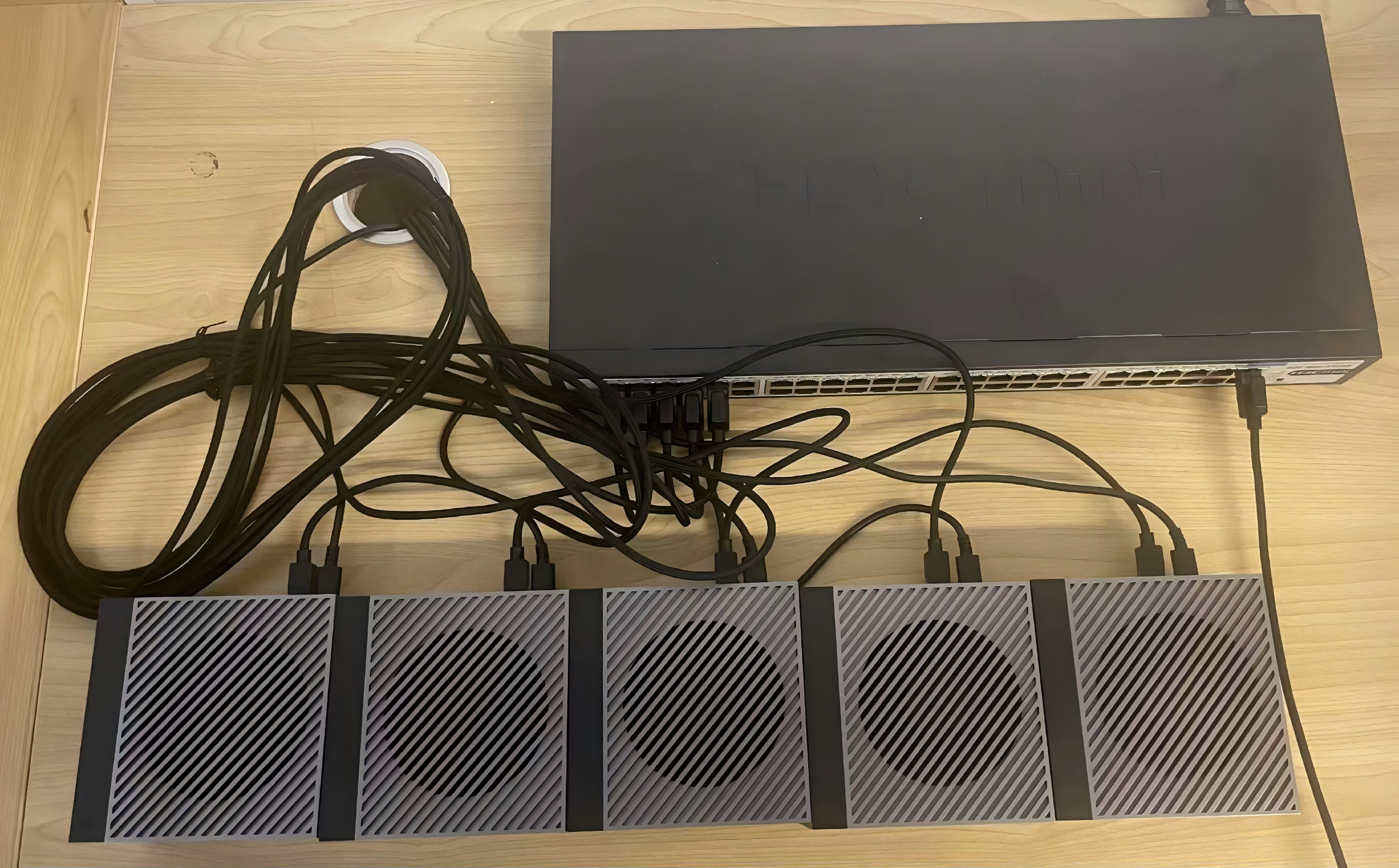}
    \caption{Implementation of an edge LLM testbed on NVIDIA Jetson AGX Orin 64GB, connected to NVIDIA GeForce RTX 4090 via a local area network.}
    \label{Fig: deploy_orin}
\end{figure}

\section{Experiment Results}\label{Sec:Experiment}
In this section, we implement our MoE$^2$ model on developed testbeds as shown in \ref{Fig: deploy_orin}. We evaluate its performance under various system constraints.
\begin{figure*}[ht] \label{Fig: delay_prompt_length}
    \centering
    \subfigure[Edge Delays under Different Prompt Lengths.]{\includegraphics[width=0.45\linewidth]{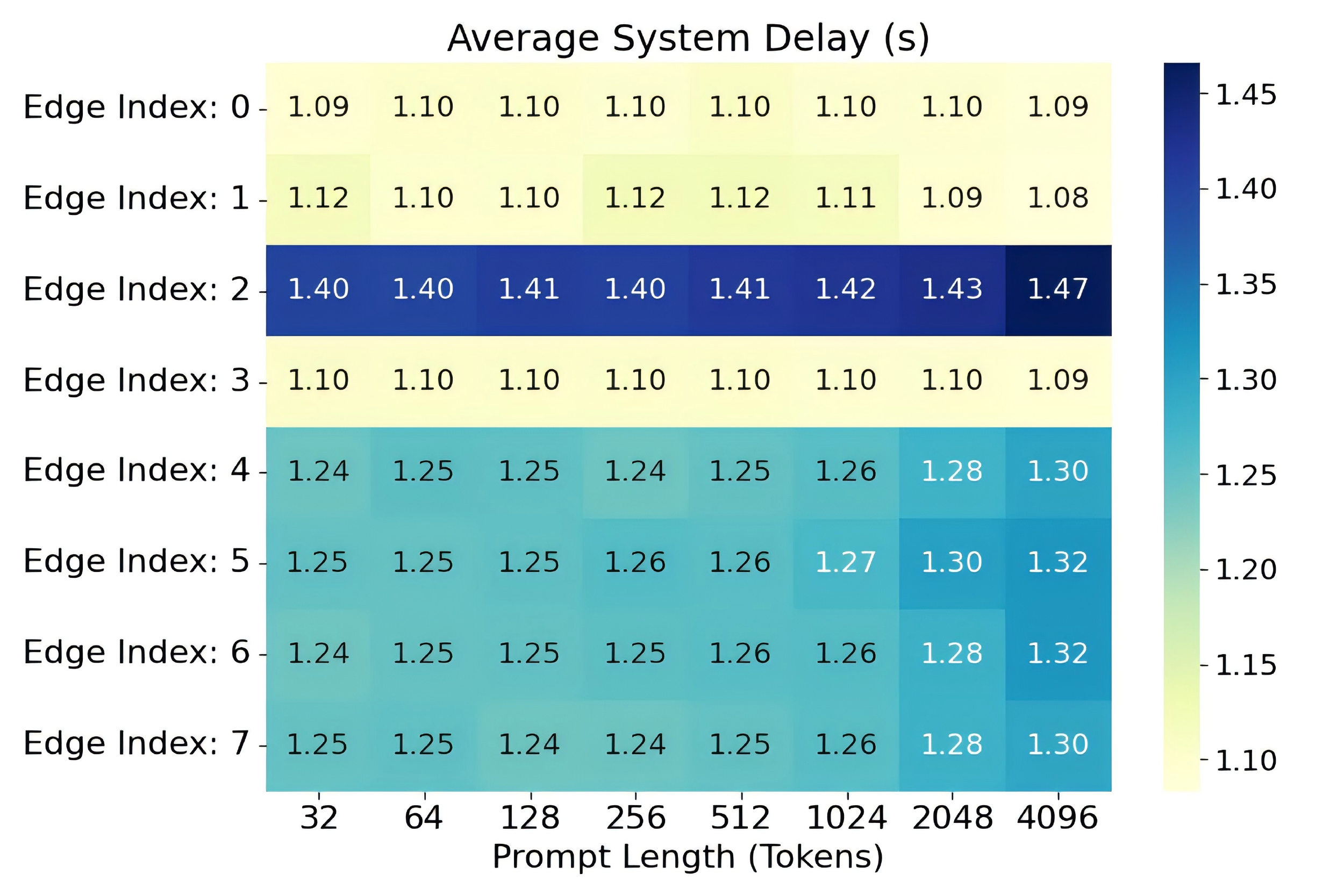}}
    \quad
    \subfigure[Edge Energy Consumptions under Different Prompt Lengths.]{\includegraphics[width=0.45\linewidth]{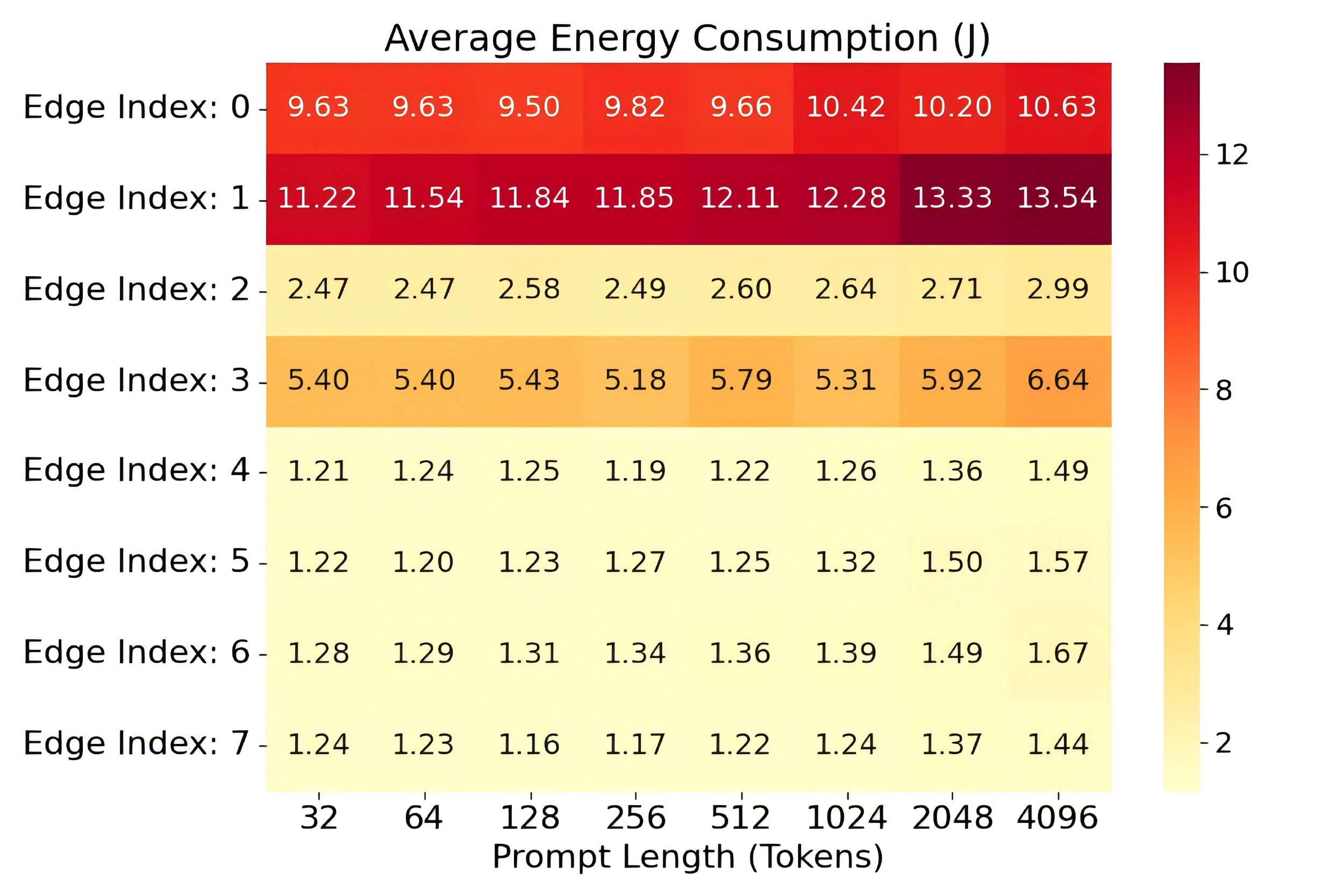}}
    \caption{Studies of edge delays and edge energy consumptions under different prompt lengths. Edges equipped with less powerful computational resources and larger-scale LLM models exhibit higher delays, while edges equipped with more powerful computational resources and larger-scale LLM models exhibit higher energy consumption.}
\end{figure*}

\subsection{Deployment of Edge Devices}

\begin{table}[t]
    \centering
    \caption{ Edge Servers Deployment}
    \begin{tabular}{ccc}
    \hline
        \textbf{Index}& \textbf{LLMs} &\textbf{Edge Devices}\\ 
        \hline
        0& Qwen2.5-7B-FT0-Q4 & NVIDIA GeForce RTX 4090\\
        1& Qwen2.5-7B-FT1-Q4 & NVIDIA GeForce RTX 4090\\ 
        2& Qwen2.5-7B-FT2-Q4 & NVIDIA Jetson AGX Orin 64GB\\
        3& Qwen2.5-3B-FT3-Q4 & NVIDIA GeForce RTX 4090\\
        4& Qwen2.5-3B-FT4-Q4 & NVIDIA Jetson AGX Orin 64GB\\
        5& Qwen2.5-3B-FT5-Q4 & NVIDIA Jetson AGX Orin 64GB\\
        6& Qwen2.5-3B-FT6-Q4 & NVIDIA Jetson AGX Orin 64GB\\
        7& Qwen2.5-3B-FT7-Q4 & NVIDIA Jetson AGX Orin 64GB\\
        \hline
    \end{tabular}
    \label{Table: deploy}
\end{table}
We implement MoE$^2$ on an edge LLM testbed with $N=8$ edge servers, each hosting an LLM agent. The edge servers are named according to the index of domain experts. The deployment details of these edge servers are summarized in Table \ref{Table: deploy}. Specifically, edge servers 0, 1, and 2 are deployed on high-performance platforms (NVIDIA GeForce RTX 4090), while the remaining edge servers (3, 4, 5, 6, 7) are deployed on NVIDIA Jetson AGX Orin 64GB devices, as shown in Fig. \ref{Fig: deploy_orin}.

We utilize two base LLM models in our deployment: Qwen2.5-7B \cite{qwen2025qwen25technicalreport} and Qwen2.5-3B \cite{qwen2025qwen25technicalreport}. These models are duplicated and fine-tuned individually before deployment on edge devices, as in Section \ref{Sec: Simulation Results}. To optimize memory usage and inference latency for edge deployment, we apply 4-bit quantization to the models. For example, ``Qwen2.5-7B-FT0-Q4'' indicates the Qwen2.5-7B model fine-tuned for cluster 0 and deployed with 4-bit quantization.

\subsection{Experiment Results}
We first analyze the edge delays and energy consumptions across different servers under varying prompt lengths.  As illustrated in Fig. \ref{Fig: delay_prompt_length}, edge servers 0 and 1 with more powerful platforms (NVIDIA RTX 4090) demonstrate significantly lower latency and higher energy consumption compared with edge server 2 with same scale LLM models. Edge server 4-7 equipped with  smaller-scale LLM models (3B) exhibits lower delay and energy consumption compared to edge servers 2 with the same platform. Furthermore, Fig. \ref{Fig: delay_prompt_length} reveals that the edge delays and energy consumptions primarily depend on the length of the prompts. In addition, their statistical significance becomes notable only when the prompt length exceeds $1024$ tokens.
This finding indicates minimal variation in processing delays between original prompts and historical context information.

Subsequently, we conduct LLM inference experiments in our collaborative inference framework for edge LLM models. Here we aim to test the model performance under the variation of delay and cost constraints. We evaluate our MoE$^2$ framework with the proposed SMO subset selection algorithm and Top-$k$ selection mechanism on various constraints of end-to-end delay $\tau_{\max}$ and energy consumption $E_{\max}$. The baselines are as follows:
\begin{itemize}
    \item \textbf{SMO with Majority Voting (SMO M. V. ): }  This method assigns equal weights to LLM models selected by the proposed SMO selection mechanism. 
    \item \textbf{Random Subset Selection with Majority Voting (Rand. M. V. ):} This method assigns equal weights to LLM models with randomly selected subsets. 
\end{itemize}

% Table for Max Delay = 1
\begin{table}[t]
\centering
\caption{Accuracy Comparison for $\tau_{\max}$ = 1 sec }
\begin{tabular}{cccccccc}
\hline
$\mathbf{E_{\max} (J)}$ & \textbf{5} & \textbf{10} & \textbf{15} & \textbf{20} & \textbf{25} & \textbf{35} & \textbf{50} \\
\hline
MoE$^2$& 58.3 & 59.3 & 59.3 & 59.3 & 59.3 & 59.9 & 59.9 \\

SMO M. V.& 58.3 & 59.1 & 59.1 & 59.1 & 59.1 & 59.4 & 59.4 \\

Rand. M. V. & 57.5 & 58.9 & 56.8 & 56.1 & 57.0 & 59.0 & 57.5 \\
\hline
\end{tabular}
\label{tab:accuracy_delay_1}
\end{table}

% Table for Max Delay = 2
\begin{table}[t]
\centering
\caption{Accuracy Comparison for $\tau_{\max}$ = 2 sec }
\begin{tabular}{cccccccc}
\hline
$\mathbf{E_{\max} (J)}$ & \textbf{5} & \textbf{10} & \textbf{15} & \textbf{20} & \textbf{25} & \textbf{35} & \textbf{50} \\
\hline
MoE$^2$& 58.3 & 59.4 & 59.4 & 66.1 & 71.6 & 71.7 & 71.8 \\

SMO M. V.& 58.3 & 59.0 & 59.2 & 64.7 & 71.6 & 69.5 & 69.5 \\

Rand. M. V. & 58.0 & 58.4 & 58.6 & 58.4 & 60.4 & 61.2 & 63.3 \\

\hline
\end{tabular}
\label{tab:accuracy_delay_2}
\end{table}

% Table for Max Delay = 3
\begin{table}[t]
\centering
\caption{Accuracy Comparison for $\tau_{\max}$ = 3 sec }
\begin{tabular}{cccccccc}
\hline
$\mathbf{E_{\max} (J)}$ & \textbf{5} & \textbf{10} & \textbf{15} & \textbf{20} & \textbf{25} & \textbf{35} & \textbf{50} \\
\hline
MoE$^2$& 58.3 & 66.2 & 66.2 & 66.2 & 71.6 & 71.7 & 71.8 \\

SMO M. V.& 58.3 & 66.2 & 66.2 & 66.2 & 71.6 & 69.5 & 69.5 \\

Rand. M. V.  & 58.0 & 61.5 & 61.0 & 60.2 & 60.7 & 62.6 & 62.9 \\

\hline
\end{tabular}
\label{tab:accuracy_delay_3}
\end{table}

The testbed experiment results are shown in Table \ref{tab:accuracy_delay_1}, \ref{tab:accuracy_delay_2}, and \ref{tab:accuracy_delay_3}. From Table \ref{tab:accuracy_delay_1}, we observe that both MoE$^2$ and SMO M. V.  consistently outperform  Rand. M.V. across all energy constraints under the tightest delay constraint of 1s, demonstrating the effect of the proposed SMO algorithm. In addition, MoE$^2$ demonstrates strong performance at higher energy constraints, achieving a 4.2\% improvement over Rand. M.V. at energy constraints of 50J per prompt. This indicates that MoE$^2$ can effectively utilize the increased energy budget to activate a more optimal set of experts.
With a more relaxed delay constraint of 2s in Table \ref{tab:accuracy_delay_2}, MoE$^2$ outperforms SMO M. V. by an average of 1.38\% and Rand. M. V. by an average of 9.4\%. The substantial improvements over Rand. M. V. highlights the effectiveness of the MoE$^2$ framework in utilizing the increased flexibility to select a more optimal set of experts with the proposed SMO algorithm.
With the most relaxed delay constraint shown in Table \ref{tab:accuracy_delay_3}, MoE$^2$ demonstrates its ability to effectively utilize the added time flexibility. While it performs on par with SMO M. V. in lower and mid-range energy constraints, it significantly outperforms Rand. M. V. up to 18.0\% at 25J. With the increased energy budget (35J and 50J), MoE$^2$ surpasses SMO M. V. by 3.2\% and 3.3\% respectively. This shows the effectiveness of the MoE$^2$ framework in utilizing the selected LLM subsets with gating networks at overall relaxed resource constraints.

In summary, the results across all three tables strongly support the effectiveness of the proposed MoE$^2$ framework. MoE$^2$ consistently outperforms both the SMO M. V. and Random M. V. methods, particularly as the energy constraint $E_{\max}$ increases and the delay constraint $\tau_{\max}$ is relaxed. This highlights the effectiveness of MoE$^2$ and the SMO selection mechanism in optimizing the utilization of multiple LLM models on edge devices. The significant improvement over Rand. M.V. underscores the importance of intelligent expert selection. While the improvement over M.V. might seem modest in some cases, it is important to remember that M.V. itself is a strong baseline equipped with proposed SMO subset selection. These findings firmly establish MoE$^2$ as a highly promising solution for deploying LLMs in resource-constrained edge computing environments.

\section{Conclusions} \label{Sec: Con}

In this paper, we propose the first MoE-aided collaborative inference framework for edge LLMs by optimally designing the gating network and the two-level expert selection mechanism.
Through theoretical analysis, we have proven that the optimality of gating parameters for the entire LLM sets can be extended to any subset, based on which we develop the MoE$^2$ framework with SMO.
Extensive experiments on real hardware platforms demonstrate that our proposed MoE$^2$ framework achieves superior performance compared to baseline methods under diverse system cost constraints.
Overall, by leveraging collaborative inference and optimized resource allocation, MoE$^2$ paves the way for efficient and effective small-scale LLMs applications in resource-constrained edge computing scenarios. 

As the first work in this domain, there are several promising directions for future research. First, it would be valuable to adapt MoE$^2$ to dynamic system conditions, such as fluctuating user requests and varying edge server loads. Second, further exploration into edge learning and cloud-edge deployment of MoE$^2$ could enhance its practical applicability. Third, the deployment of multi-modal LLMs on edge servers presents an exciting avenue for investigation, along with exploring the potential of MoE$^2$ to support LLM agents without requiring fine-tuning.

\appendices

\begin{figure*}[t]
    \centering
    \includegraphics[width=18cm]{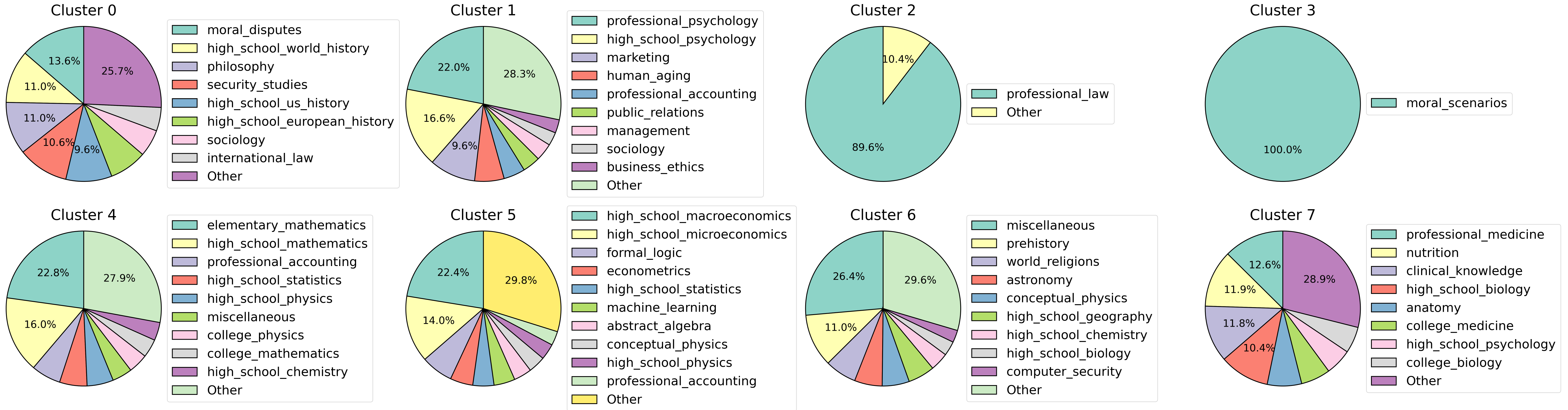}
    \caption{Clustering Result of MMLU. The prompts are clustered based on their embedding. The results show that similar subjects fall into one cluster. For example, cluster 4 mainly contains math-related and physical-related subjects and cluster 7 includes medical-related subjects. This result coincides with the intuition of us to distinguish domains.}
    \label{Fig: Clustering}
\end{figure*}

\section{Environment Setup}
\label{Appendix: Env}

\textit{Fine-tuning LLM Agents.} We fine-tune the LLM agents using eight NVIDIA Tesla A800 80G GPUs for each cluster to develop domain-specific experts. The steps are as follows:
\begin{enumerate}
    \item \textbf{Dataset Fusing}. We combine the datasets of different domains to create a comprehensive dataset, including MMLU-pro \cite{wang2024mmluprorobustchallengingmultitask}, ARC \cite{clark2018thinksolvedquestionanswering}, SciQ \cite{welbl2017crowdsourcingmultiplechoicescience} and AGIEval \cite{zhong2023agievalhumancentricbenchmarkevaluating}. The MMLU dataset is not included in this comprehensive dataset, as it is used for performance evaluation.
    \item \textbf{Embedding-based Clustering}. We use the pre-trained mGTE model \cite{zhang2024mgte} to encode the prompts into embeddings. Then, we apply the $K$-means algorithm to cluster the prompts based on their embeddings. We set the number of clusters to $8$, corresponding to the number of edge servers in the testbed described in Section \ref{Sec:Experiment}.
    \item \textbf{Model Fine-tuning}. We fine-tune the LLM model on each cluster to create domain-specific experts. We use LoRA \cite{hu2021loralowrankadaptationlarge} method to fine-tune the LLM model. The fine-tuning parameters are shown in Table \ref{Table: Fine-tuning Parameters}.
\end{enumerate}

\begin{table}[t]
    \centering
    \caption{Fine-tuning Parameters}
    \begin{tabular}{cc}
    \hline
        \textbf{Parameters} & \textbf{Values} \\ 
        \hline
        LoRA Rank & $256$ \\
        LoRA Alpha & $512$ \\ 
        LoRA Dropout & $0.05$ \\
        Cutoff Length & $8192$ \\
        Training Epochs & $3$ \\
        Batch Size & $8$ \\
        Gradient Accumulation Steps & $16$ \\
        Optimizer & AdamW \\
        Learning Rate Scheduler & CosineAnnealingLR \\
        Init Learning Rate & $5\times10^{-5}$ \\
        Max Gradient Norm & $1.0$ \\
        \hline
    \end{tabular}
    \label{Table: Fine-tuning Parameters}
\end{table}

After fine-tuning, we obtain $8$ domain-specific experts, each corresponding to a cluster. We use these experts as the LLM agents in the MoE system. To validate the effectiveness of this method, we calculate the embeddings of the MMLU dataset and present the clustering results in Fig. \ref{Fig: Clustering}. From the results, we observe that similar subjects fall into one cluster. For example, cluster 4 mainly contains math-related and physical-related subjects and cluster 7 includes medical-related subjects. This result coincides with the intuition of us to distinguish domains. Most importantly, the dataset used to fine-tune the LLM agents does not include the MMLU dataset, which exhibits strong generalization ability of this clustering method.

\textit{Training Gating Network.} For the gating network, we use a feed-forward neural network with two hidden layers. We also use a residual connection between the input of each layer and the input of the last layer. All residual activations are concatenated before the final output layer. The learning rate is scheduled by the ReduceLROnPlateau method, which reduces the learning rate by a factor if the validation loss does not improve for a certain number of steps. The parameters of the gating network are shown in Table \ref{Table: Gating Network Parameters}.

\textit{Collaborative Inference Architecture.}  We conduct collaborative inference with edge servers with Nvidia RTX 4090 GPUs and Nvidia Jetson AGX Orins. Detailed configuration is presented in Table \ref{Table: deploy}.

\begin{table}[t]
    \centering
    \caption{Gating Network Parameters}
    \begin{tabular}{cc}
    \hline
        \textbf{Parameters} & \textbf{Values} \\ 
        \hline
        Input (Embedding) Dimension & $1024$ \\
        Hidden Dimension & $128$ \\
        Output Dimension & $8$ \\
        Dropout & $0.1$ \\
        Activation Function & PReLU \\
        Optimizer & AdamW \\
        Learning Rate Scheduler & ReduceLROnPlateau \\
        Init Learning Rate & $10^{-5}$ \\
        Learning Rate Decay Factor & $0.8$ \\
        Learning Rate Decay Patience & $50$ \\
        Gradient Clip & $1.0$ \\
        Gradient Accumulation Steps & $8$ \\
        Batch Size & $8$ \\
        \hline
    \end{tabular}
    \label{Table: Gating Network Parameters}
\end{table}

\section{Proof of Theorem \ref{Theorem: Optimality for subset}}\label{AppenProof}
We prove this theorem by contradiction. Suppose that the above theorem does not hold, then there exists an optimal $\hat{\bm{\theta}}$ on $\S$ such that:
\begin{align}
    \mathcal{L}(\hat{\bm{\theta}},\S)<\mathcal{L}(\bm{\theta}^{*}(\mathcal{N}),\S).
\end{align}
where $\bm{\theta}^{*}(\mathcal{N})$ satisfies
\begin{align}
   \frac{g_{n}(\bx, \bm{\theta}^{*}({\S}))}{\sum\limits_{n'\in \S}g_{n'}(\bx, \bm{\theta}^{*}({\S}))}=\frac{g_n(\bx, \bm{\theta}^{*}({\N}))}{\sum\limits_{n'\in \S}g_{n'}(\bx, \bm{\theta}^{*}({\N}))}, \forall n\in\S.
\end{align}
This means that there exists a parameter $\hat{\bm{\theta}}$ that achieves a lower loss on the subset $\S$ compared to $\bm{\theta}^{*}(\N)$.
% where
%     \begin{align}
%         \min_{\S}& \quad  \mathcal{L}(\btheta^*({\S}),\S)\\
%         \text{s.t.}&\quad \mathop{\mathbb{E}}\limits_{\bx\sim \mathcal{P}_{\bx}}[\tau(\bx,\S)]\leq\tau_{\max}, \\
%         &\quad \mathop{\mathbb{E}}\limits_{\bx\sim \mathcal{P}_{\bx}}[E(\bx, \S)]\leq E_{\max},
%     \end{align}
    
 Let $c(\bx)={\sum\limits_{n\in\mathcal{S}}g_n(\bx, \bm{\theta}^{*}(\N))}\big/{\sum\limits_{n\in\mathcal{S}}g_n(\bx, \hat{\bm{\theta}})}$. 
 Consider the function $\boldsymbol{h}: \mathcal{X}\rightarrow \mathbb{R}^{|\mathcal{N}|}$, given by
 % By invoking the Universal Approximation Theorem \cite{hornik1989multilayer}, there exists parameters $\bm{\theta}'$ such that:
\begin{align}
{h}_n(\bx) = \begin{cases}
    c(\bx)g_n(\bx, \hat{\bm{\theta}}), & n\in\mathcal{S}, \\
    g_n(\bx, \bm{\theta}^{*}(\N)), & n\in\mathcal{N}\setminus\mathcal{S},
\end{cases}
\end{align}
Since $\boldsymbol{h}(\bx)$ has at most countably many discontinuities, it is Borel measurable
\cite{good1958mathematical}. Invoking the universal approximation theorem \cite{hornik1989multilayer}, there exists $\bm{\theta}'$ such that the MLP $\bg(\bx, \bm{\theta}')$ with a sufficiently large width can approximate $\boldsymbol{h}(\bx)$ to any degree of accuracy.

This
which ensures that
\begin{align}
    \sum\limits_{n\in\mathcal{S}}g_n(\bx, \bm{\theta}')=\sum\limits_{n\in\mathcal{S}}g_n(\bx, \bm{\theta}^{*}(\N)), \forall \bx\in\mathcal{X}.\label{Eq-5}
\end{align} Thus, for all $\bx\in\mathcal{X}$, we have
\begin{equation}
\begin{aligned}
    \frac{c(\bx)g_n(\bx, \hat{\bm{\theta}})}{\sum\limits_{n'\in \S}c(\bx)g_{n'}(\bx, \hat{\bm{\theta}})}=\frac{g_n(\bx, \hat{\bm{\theta}})}{\sum\limits_{n'\in \S}g_{n'}(\bx, \hat{\bm{\theta}})}.
\end{aligned}
\end{equation}
From the definition of $\mathcal{L}(\bm{\theta},\S)$, we have:
\begin{equation}
\begin{aligned}
    &  \mathcal{L}(\hat{\bm{\theta}},\S)=  \mathcal{L}(\bm{\theta}',\S)\\
    % &= \mathbb{E}\left[ \sum_{t=1}^{T} -\log\left(\sum\limits_{n\in \S}\frac{g_{n}(\bx, \bm{\theta}')}{\sum\limits_{n'\in \S}g_{n'}(\bx, \bm{\theta}')} f_{n,y_t}(h(\bx,t)) \right)\right] \\
    &= \frac{1}{|\mathcal{D}|}\sum_{(\bx, \by)\in\mathcal{D}}\left[ \sum_{t=1}^{T} -\left(\log\left(   \sum\limits_{n\in \S}g_{n}(\bx, \bm{\theta}') f_{n,y_t}(h(\bx,t))\right) - \right.\right. \\ 
    & \quad\quad\quad \left.\left. \log\left(\sum\limits_{n'\in \S}g_{n'}(\bx, \bm{\theta}')\right)\right)\right]\\
    &< \mathcal{L}(\bm{\theta}^{*}(\N), \S) \\
    &= \frac{1}{|\mathcal{D}|}\sum_{(\bx, \by)\in\mathcal{D}}\left[ \sum_{t=1}^{T} -\log\left(\sum\limits_{n\in \S}\frac{g_n(\bx, \bm{\theta}^{*}(\N))}{\sum\limits_{n'\in \S}g_{n'}(\bx, \bm{\theta}^{*}(\N))} f_{n,y_t}(h(\bx,t))\right)\right] \\
    &= \frac{1}{|\mathcal{D}|}\sum_{(\bx, \by)\in\mathcal{D}}\left[ \sum_{t=1}^{T} -\left(\log\left(   \sum\limits_{n\in \S}g_n(\bx, \bm{\theta}^{*}(\N)) f_{n,y_t}(h(\bx,t))\right)\right.\right. \\
    & \quad\quad\quad \left.\left. - \log\left(\sum\limits_{n'\in \S}g_{n'}(\bx, \bm{\theta}^{*}(\N))\right)\right)\right].
\end{aligned}
\end{equation}
From \eqref{Eq-5}, it follows
\begin{equation}
    \begin{aligned} \label{ineq1}
    & \frac{1}{|\mathcal{D}|}\sum_{(\bx, \by)\in\mathcal{D}}\left[ \sum_{t=1}^{T} -\log\left(\sum\limits_{n\in \S}g_n(\bx, \hat{\bm{\theta}}) f_{n,y_t}(h(\bx,t))\right) \right] \\
    &< \frac{1}{|\mathcal{D}|}\sum_{(\bx, \by)\in\mathcal{D}}\left[ \sum_{t=1}^{T} -\log\left(\sum\limits_{n\in \S}g_n(\bx, \bm{\theta}^{*}(\N)) f_{n,y_t}(h(\bx,t))\right) \right].\\
    \end{aligned}
\end{equation}
On the other hand, from the optimality of ${\bm{\theta}}^*(\mathcal{N})$, we have:
\begin{equation}
\begin{aligned}
    &\quad \mathcal{L}(\bm{\theta
    }^*(\mathcal{N}),\mathcal{\N}) \\
    % &= \frac{1}{|\mathcal{D}|}\sum_{(\bx, \by)\in\mathcal{D}} \left[ \sum_{t=1}^{T} -\log\left(\sum\limits_{n\in \N}\frac{g_n(\bx, \bm{\theta}^{*}(\N))}{\sum\limits_{n'\in \N}g_{n'}(\bx, \bm{\theta}^{*}(\N))} f_{n,y_t}(h(\bx,t))\right) \right]\\
    &= \frac{1}{|\mathcal{D}|}\sum_{(\bx, \by)\in\mathcal{D}} \left[ \sum_{t=1}^{T} -\log\left(  \sum\limits_{n\in \S}g_n(\bx, \bm{\theta}^{*}(\N)) f_{n,y_t}(h(\bx,t)) \right.\right. \\ 
    & \left.\left. \quad\quad\quad\quad + \sum\limits_{n\in \N \setminus \S}g_n(\bx, \bm{\theta}^{*}(\N)) f_{n,y_t}(h(\bx,t))\right)\right. \\
    & \left. \quad\quad\quad\quad + \log\left(\sum\limits_{n'\in \N}g_{n'}(\bx, \bm{\theta}^{*}(\N))\right)\right] \\
    & \leq \mathcal{L}(\bm{\theta}',\N) \\
    &= \frac{1}{|\mathcal{D}|}\sum_{(\bx, \by)\in\mathcal{D}} \left[ \sum_{t=1}^{T} -\log\left({\sum\limits_{n\in \N}g_n(\bx, \hat{\bm{\theta}})f_{n,y_t}(h(\bx,t))}\Big/ \right.\right.\\
    &\quad\quad \left.\left. \left({\sum\limits_{n'\in \S}g_{n'}(\bx, \hat{\bm{\theta}})+\sum\limits_{n'\in \N\setminus\S}g_{n'}(\bx, \bm{\theta}^{*}(\N))} \right)\right)\right] \\
    &= \frac{1}{|\mathcal{D}|}\sum_{(\bx, \by)\in\mathcal{D}} \left[ \sum_{t=1}^{T} - \log\left(   \sum\limits_{n\in \S}g_n(\bx, \hat{\bm{\theta}}) f_{n,y_t}(h(\bx,t)) \right.\right. \\
    & \quad\quad \left.\left. + \sum\limits_{n\in \N \setminus \S}g_n(\bx, \bm{\theta}^{*}(\N)) f_{n,y_t}(h(\bx,t))\right)\right.\\
    &\left.+ \log\left(\sum\limits_{n'\in \S}g_{n'}(\bx, \hat{\bm{\theta}})+\sum\limits_{n'\in \N\setminus\S}g_{n'}(\bx, \bm{\theta}^*(\N))\right)\right]
\end{aligned}
\end{equation}
It follows that
\begin{equation}
\begin{aligned} \label{ineq2}
    &\frac{1}{|\mathcal{D}|}\sum_{(\bx, \by)\in\mathcal{D}} \left[ \sum_{t=1}^{T} -\log\left(   \sum\limits_{n\in \S}g_n(\bx, \bm{\theta}^{*}(\N)) f_{n,y_t}(h(\bx,t)) \right) \right.\\
    & \quad\quad + \left. \log\left(\sum\limits_{n'\in \N}g_{n'}(\bx, \bm{\theta}^{*}(\N))\right) \right]\\
    &\leq \frac{1}{|\mathcal{D}|}\sum_{(\bx, \by)\in\mathcal{D}} \left[ \sum_{t=1}^{T} -\log\left(\sum\limits_{n\in \S}g_n(\bx, \hat{\bm{\theta}}) f_{n,y_t}(h(\bx,t)) \right)\right. \\ 
    &+ \left. \log\left(\sum\limits_{n'\in \S}g_{n'}(\bx, \hat{\bm{\theta}})+\sum\limits_{n'\in \N\setminus\S}g_{n'}(\bx, \bm{\theta}^{*}(\N))\right)\right]
\end{aligned}
\end{equation}
Again from \eqref{Eq-5}, we observe that \eqref{ineq1} contradicts with \eqref{ineq2}, which completes the proof.

\bibliographystyle{IEEEtran}
\bibliography{ref}

% \clearpage

\end{document}

% --- supplement: Appendix.tex ---

\maketitle
\section{Appendix A}

\begin{Theorem}[Optimality for subset]
For any given subset $\S\subseteq\N$ and the following optimal parameters:
\begin{align}
    \label{Eq: Optimality for subset}
    \bm{\theta}^{*}(\N) = \arg\min\limits_{\bm{\theta}}  \mathcal{L}(\bm{\theta},\N). \\
    \bm{\theta}^{*}(\S) = \arg\min\limits_{\bm{\theta}}  \mathcal{L}(\bm{\theta},\S),
\end{align}
we have:
\begin{align}
    \frac{g_n(\bx, \bm{\theta}^{*}(\N))}{\sum\limits_{n'\in \S'}g_{n'}(\bx, \bm{\theta}^{*}(\N))}=\frac{g_{n}(\bx, \bm{\theta}^{*}(\S))}{\sum\limits_{n'\in \S'}g_{n'}(\bx, \bm{\theta}^{*}(\S))}, \forall n\in\S.
\end{align}
\end{Theorem}
\begin{proof}
We prove this theorem by contradiction. Assume the above theorem does not hold, then there exists another optimal $\hat{\bm{\theta}}$ on $\S$ such that:
\begin{align}
    \mathcal{L}(\hat{\bm{\theta}},\S)<\mathcal{L}(\bm{\theta}^{*}(\N),\S).
\end{align}
That means there exists some parameter $\hat{\bm{\theta}}$ that can achieve a lower loss than $\bm{\theta}^{*}(\N)$ on the subset $\S$. Let $c={\sum\limits_{n\in\mathcal{S}}g_n(\bx, \bm{\theta}^{*}(\N))}\big/{\sum\limits_{n\in\mathcal{S}}g_n(\bx, \hat{\bm{\theta}})}$, by invoking the Universal Approximation Theorem \citep{hornik1989multilayer}, there exists parameters $\bm{\theta}'$ such that:
\begin{align}
g_n(\bx, \bm{\theta}') = \begin{cases}
    cg_n(\bx, \hat{\bm{\theta}}), & n\in\mathcal{S}, \\
    g_n(\bx, \bm{\theta}^{*}(\N)), & n\in\mathcal{N}\setminus\mathcal{S},
\end{cases}
\end{align}
which ensures $\sum\limits_{n\in\mathcal{S}}g_n(\bx, \bm{\theta}')=\sum\limits_{n\in\mathcal{S}}g_n(\bx, \bm{\theta}^{*}(\N))$. Thus, for all $\bx\in\mathcal{X}$, we have
\begin{equation}
\begin{aligned}
    \frac{cg_n(\bx, \hat{\bm{\theta}})}{\sum\limits_{n'\in \S}cg_{n'}(\bx, \hat{\bm{\theta}})}=\frac{g_n(\bx, \hat{\bm{\theta}})}{\sum\limits_{n'\in \S}g_{n'}(\bx, \hat{\bm{\theta}})}.
\end{aligned}
\end{equation}
From the definition of $\mathcal{L}(\bm{\theta},\S)$, we have:
\begin{equation}
\begin{aligned}
    &  \mathcal{L}(\hat{\bm{\theta}},\S)=  \mathcal{L}(\bm{\theta}',\S)\\
    &= \mathbb{E}_{\bx\in\mathcal{X}}\left[ \sum_{t=1}^{T} -\log\left(\sum\limits_{n\in \S}\frac{g_{n}(\bx, \bm{\theta}')}{\sum\limits_{n'\in \S}g_{n'}(\bx, \bm{\theta}')} f_{n,y_t}(h(\bx,t)) \right)\right] \\
    &= \mathbb{E}_{\bx\in\mathcal{X}}\left[ \sum_{t=1}^{T} -\left(\log\left(   \sum\limits_{n\in \S}g_{n}(\bx, \bm{\theta}') f_{n,y_t}(h(\bx,t))\right) - \right.\right. \\ 
    & \quad\quad\quad \left.\left. \log\left(\sum\limits_{n'\in \S}g_{n'}(\bx, \bm{\theta}')\right)\right)\right]\\
    &< \mathcal{L}(\bm{\theta}^{*}(\N), \S) \\
    &= \mathbb{E}_{\bx\in\mathcal{X}}\left[ \sum_{t=1}^{T} -\log\left(\sum\limits_{n\in \S}\frac{g_n(\bx, \bm{\theta}^{*}(\N))}{\sum\limits_{n'\in \S}g_{n'}(\bx, \bm{\theta}^{*}(\N))} f_{n,y_t}(h(\bx,t))\right)\right] \\
    &= \mathbb{E}_{\bx\in\mathcal{X}}\left[ \sum_{t=1}^{T} -\left(\log\left(   \sum\limits_{n\in \S}g_n(\bx, \bm{\theta}^{*}(\N)) f_{n,y_t}(h(\bx,t))\right)\right.\right. \\
    & \quad\quad\quad \left.\left. - \log\left(\sum\limits_{n'\in \S}g_{n'}(\bx, \bm{\theta}^{*}(\N))\right)\right)\right],
\end{aligned}
\end{equation}
which means
\begin{equation}
    \begin{aligned} \label{ineq1}
    & \mathbb{E}_{\bx\in\mathcal{X}}\left[ \sum_{t=1}^{T} -\log\left(\sum\limits_{n\in \S}g_n(\bx, \hat{\bm{\theta}}) f_{n,y_t}(h(\bx,t))\right) \right] \\
    &< \mathbb{E}_{\bx\in\mathcal{X}}\left[ \sum_{t=1}^{T} -\log\left(\sum\limits_{n\in \S}g_n(\bx, \bm{\theta}^{*}(\N)) f_{n,y_t}(h(\bx,t))\right) \right].\\
    \end{aligned}
\end{equation}
From the optimality of $\hat{\bm{\theta}}$, we have:
\begin{equation}
\begin{aligned}
    & \mathcal{L}(\hat{\bm{\theta}},\S) \\
    &= \mathbb{E}_{\bx\in\mathcal{X}} \left[ \sum_{t=1}^{T} -\log\left(\sum\limits_{n\in \N}\frac{g_n(\bx, \bm{\theta}^{*}(\N))}{\sum\limits_{n'\in \N}g_{n'}(\bx, \bm{\theta}^{*}(\N))} f_{n,y_t}(h(\bx,t))\right) \right]\\
    &= \mathbb{E}_{\bx\in\mathcal{X}} \left[ \sum_{t=1}^{T} -\log\left(  \sum\limits_{n\in \S}g_n(\bx, \bm{\theta}^{*}(\N)) f_{n,y_t}(h(\bx,t)) \right.\right. \\ 
    & \left.\left. \quad\quad\quad\quad + \sum\limits_{n\in \N \setminus \S}g_n(\bx, \bm{\theta}^{*}(\N)) f_{n,y_t}(h(\bx,t))\right)\right. \\
    & \left. \quad\quad\quad\quad + \log\left(\sum\limits_{n'\in \N}g_{n'}(\bx, \bm{\theta}^{*}(\N))\right)\right] \\
    & \leq \mathcal{L}(\bm{\theta}',\N) \\
    &= \mathbb{E}_{\bx\in\mathcal{X}} \left[ \sum_{t=1}^{T} -\log\left({\sum\limits_{n\in \N}g_n(\bx, \hat{\bm{\theta}})f_{n,y_t}(h(\bx,t))}\Big/ \right.\right.\\
    &\quad\quad \left.\left. \left({\sum\limits_{n'\in \S}g_{n'}(\bx, \hat{\bm{\theta}})+\sum\limits_{n'\in \N\setminus\S}g_{n'}(\bx, \bm{\theta}^{*}(\N))} \right)\right)\right] \\
    &= \mathbb{E}_{\bx\in\mathcal{X}} \left[ \sum_{t=1}^{T} - \log\left(   \sum\limits_{n\in \S}g_n(\bx, \hat{\bm{\theta}}) f_{n,y_t}(h(\bx,t)) \right.\right. \\
    & \quad\quad \left.\left. + \sum\limits_{n\in \N \setminus \S}g_n(\bx, \bm{\theta}^{*}(\N)) f_{n,y_t}(h(\bx,t))\right)\right.\\
    &\left.+ \log\left(\sum\limits_{n'\in \S}g_{n'}(\bx, \hat{\bm{\theta}})+\sum\limits_{n'\in \N\setminus\S}g_{n'}(\bx, \bm{\theta}^*(\N))\right)\right]
\end{aligned}
\end{equation}
Thus, we have
\begin{equation}
\begin{aligned} \label{ineq2}
    &\mathbb{E}_{\bx\in\mathcal{X}} \left[ \sum_{t=1}^{T} -\log\left(   \sum\limits_{n\in \S}g_n(\bx, \bm{\theta}^{*}(\N)) f_{n,y_t}(h(\bx,t)) \right) \right.\\
    & \quad\quad + \left. \log\left(\sum\limits_{n'\in \N}g_{n'}(\bx, \bm{\theta}^{*}(\N))\right) \right]\\
    &\leq \mathbb{E}_{\bx\in\mathcal{X}} \left[ \sum_{t=1}^{T} -\log\left(\sum\limits_{n\in \S}g_n(\bx, \hat{\bm{\theta}}) f_{n,y_t}(h(\bx,t)) \right)\right. \\ 
    &+ \left. \log\left(\sum\limits_{n'\in \S}g_{n'}(\bx, \hat{\bm{\theta}})+\sum\limits_{n'\in \N\setminus\S}g_{n'}(\bx, \bm{\theta}^{*}(\N))\right)\right]
\end{aligned}
\end{equation}
From  \eqref{ineq1} and \eqref{ineq2}, we have
\begin{equation}
\begin{aligned}
    & \mathbb{E}_{\bx\in\mathcal{X}} \left[ \sum_{t=1}^{T} -\log\left(\frac{\sum\limits_{n'\in \S}g_{n'}(\bx, \hat{\bm{\theta}}) f_{n,y_t}(h(\bx,t))}{\sum\limits_{n'\in \S}g_{n'}(\bx, \bm{\theta}^{*}(\N)) f_{n,y_t}(h(\bx,t))}\right) \right]\\
    % &< \mathbb{E}_{\bx\in\mathcal{X}} \left[\sum_{t=1}^{T} -\log\left(\frac{\sum\limits_{n'\in \S}g_{n'}(\bx, \hat{\bm{\theta}}) f_{n,y_t}(h(\bx,t))+\sum\limits_{n'\in \N\setminus\S}g_{n'}(\bx, \bm{\theta}^{*}(\N)) f_{n,y_t}(h(\bx,t))}{\sum\limits_{n'\in \S}g_{n'}(\bx, \bm{\theta}^{*}(\N)) f_{n,y_t}(h(\bx,t))+\sum\limits_{n'\in \N\setminus\S}g_{n'}(\bx, \bm{\theta}^{*}(\N)) f_{n,y_t}(h(\bx,t))}\right) \right],
    &< \mathbb{E}_{\bx\in\mathcal{X}} \left[\sum_{t=1}^{T} -\log\left(\sum\limits_{n'\in \S}g_{n'}(\bx, \hat{\bm{\theta}}) f_{n,y_t}(h(\bx,t)) \right.\right.\\
    & \quad\quad\quad \left.\left.\left. +\sum\limits_{n'\in \N\setminus\S}g_{n'}(\bx, \bm{\theta}^{*}(\N)) f_{n,y_t}(h(\bx,t))\right) \right.\right.\\
    & \quad \left.\left. +\log\left(\sum\limits_{n'\in \S}g_{n'}(\bx, \bm{\theta}^{*}(\N)) f_{n,y_t}(h(\bx,t)) \right.\right.\right.\\
    & \quad\quad\quad \left.\left. +\sum\limits_{n'\in \N\setminus\S}g_{n'}(\bx, \bm{\theta}^{*}(\N)) f_{n,y_t}(h(\bx,t))\right) \right],
\end{aligned}
\end{equation}
which contradicts with
\begin{equation}
\begin{aligned}
    & \mathbb{E}_{\bx\in\mathcal{X}} \left[ \sum_{t=1}^{T} -\log\left(\sum\limits_{n'\in \S}g_{n'}(\bx, \hat{\bm{\theta}}) f_{n,y_t}(h(\bx,t))\right)\right] \\
    & \leq \mathbb{E}_{\bx\in\mathcal{X}} \left[ \sum_{t=1}^{T} -\log\left(\sum\limits_{n'\in \S}g_{n'}(\bx, \bm{\theta}^{*}(\N)) f_{n,y_t}(h(\bx,t))\right)\right].
\end{aligned}
\end{equation}
\end{proof}

\begin{Theorem}[Optimality for subset]
Let $\bm{\theta}^{*}(\N)=\arg\min\limits_{\bm{\theta}}\mathcal{L}(\boldsymbol{\theta}, \N)$ denote the optimal parameters for the gating network given experts set $\mathcal{N}$. Given the following loss function:
\begin{equation}
\begin{aligned}
    & \mathcal{L}(\bx, \bm{\omega}, \S) = -\sum_{t=1}^{T} \log\left(\sum\limits_{n\in \mathcal{S}}  \omega_n f_{n}(h(\bx,t))\right),
\end{aligned}
\end{equation}
Then we have
\begin{align}
    \bm{\omega}(\bx, \bm{\theta}^{*}(\N), \S)=\min_{\bm{w}} \mathcal{L}(\bx, \bm{\omega}, \S).
\end{align}
\end{Theorem}
\begin{proof}

From the definition, we know that:
\begin{equation}
    \begin{aligned}
    \mathcal{L}(\boldsymbol{\theta}^*,\N) 
    &= \min_{\bm{\theta}} \mathcal{L}(\boldsymbol{\theta},\N) \\
    &= \mathop{\mathbb{E}}\limits_{\bx\sim \mathcal{P}(\mathcal{X})}\left[-\sum_{t=1}^{T} \log\left(\sum\limits_{n\in \N}  \omega_n(\bx, \bm{\theta}^{*}(\N), \N) f_{n}(h(\bx,t))\right)\right] \\
    &= 
    \end{aligned}
\end{equation}
If we consider a subset $\S\subseteq N$, we have:
\begin{align}
\mathcal{L}(\boldsymbol{\theta}^*,\S) = \mathop{\mathbb{E}}\limits_{\bx\sim \mathcal{P}(\mathcal{X})}\left[-\sum_{t=1}^{T} \log\left(\sum\limits_{n\in \S}  \omega_n(\bx, \bm{\theta}^{*}(\N), \S) f_{n}(h(\bx,t))\right)\right],
\end{align}
Suppose there exists $\hat{\bm{\theta}}$ that:
\begin{equation}
    \mathcal{L}(\hat{\bm{\theta}},\S) = \min_{{\bm{\theta}}} \mathcal{L}(\boldsymbol{\theta},\S)
\end{equation}
and:
\begin{equation}
\begin{aligned}
    \mathcal{L}(\hat{\bm{\theta}},\S) < \mathcal{L}(\boldsymbol{\theta}^*,\S), \\
    \mathcal{L}(\hat{\bm{\theta}},\N) > \mathcal{L}(\boldsymbol{\theta}^*,\N)
\end{aligned}
\end{equation}
which means:
\begin{equation}
\begin{aligned}
    \mathop{\mathbb{E}}\limits_{\bx\sim \mathcal{P}(\mathcal{X})}\left[-\sum_{t=1}^{T} \log\left(\sum\limits_{n\in \S}  \omega_n(\bx, \hat{\bm{\theta}}, \S) f_{n}(h(\bx,t))\right)\right] \\
    < \mathop{\mathbb{E}}\limits_{\bx\sim \mathcal{P}(\mathcal{X})}\left[-\sum_{t=1}^{T} \log\left(\sum\limits_{n\in \S}  \omega_n(\bx, \bm{\theta}^{*}(\N), \S) f_{n}(h(\bx,t))\right)\right]
\end{aligned}
\end{equation}
and:
\begin{equation}
    \label{Eq: inequality on N}
    \begin{aligned}
        &\mathop{\mathbb{E}}\limits_{\bx\sim \mathcal{P}(\mathcal{X})}\left[-\sum_{t=1}^{T} \log\left(\sum\limits_{n\in \N}  \omega_n(\bx, \bm{\theta}^{*}(\N), \N) f_{n}(h(\bx,t))\right)\right] \\
        % &= \mathop{\mathbb{E}}\limits_{\bx\sim \mathcal{P}(\mathcal{X})}\left[-\sum_{t=1}^{T} \log\left(\sum\limits_{n\in \S}  \omega_n(\bx, \bm{\theta}^{*}(\N), \N) f_{n}(h(\bx,t)+\sum\limits_{n\in {\N/\S}}  \omega_n(\bx, \bm{\theta}^{*}(\N), \N) f_{n}(h(\bx,t))\right)\right] \\
        &< \mathop{\mathbb{E}}\limits_{\bx\sim \mathcal{P}(\mathcal{X})}\left[-\sum_{t=1}^{T} \log\left(\sum\limits_{n\in \S}  \omega_n(\bx, \hat{\bm{\theta}}, \N) f_{n}(h(\bx,t))\right)\right]
        % &= \mathop{\mathbb{E}}\limits_{\bx\sim \mathcal{P}(\mathcal{X})}\left[-\sum_{t=1}^{T} \log\left(\sum\limits_{n\in \S}  \omega_n(\bx, \hat{\bm{\theta}}, \N) f_{n}(h(\bx,t)+\sum\limits_{n\in {\N/\S}}  \omega_n(\bx, \hat{\bm{\theta}}, \N) f_{n}(h(\bx,t))\right)\right]
    \end{aligned}
\end{equation}
Note that for any $\bm{\theta}$, we have:
\begin{equation}
    \begin{aligned}
        \omega_n(\bx, {\bm{\theta}}, \S) &= \frac{\exp(g_n(\bx,\boldsymbol{\theta}))}{\sum\limits_{n'\in \mathcal{S}}\exp(g_{n'}(\bx,\boldsymbol{\theta}))} \\
        &> \frac{\exp(g_n(\bx,\boldsymbol{\theta}))}{\sum\limits_{n'\in \mathcal{N}}\exp(g_{n'}(\bx,\boldsymbol{\theta}))} = \omega_n(\bx, {\bm{\theta}}, \N)
    \end{aligned}
\end{equation}
\end{proof}
Therefore, Eq. \ref{Eq: inequality on N} can be rewritten as:
\begin{equation}
    \begin{aligned}
        &\mathop{\mathbb{E}}\limits_{\bx\sim \mathcal{P}(\mathcal{X})}\left[-\sum_{t=1}^{T} \log\left(\sum\limits_{n\in \N}  \omega_n(\bx, \bm{\theta}^{*}(\N), \S) f_{n}(h(\bx,t))\right)\right] \\
        &< \mathop{\mathbb{E}}\limits_{\bx\sim \mathcal{P}(\mathcal{X})}\left[-\sum_{t=1}^{T} \log\left(\sum\limits_{n\in \S}  \omega_n(\bx, \hat{\bm{\theta}}, \N) f_{n}(h(\bx,t))\right)\right]
    \end{aligned}
\end{equation}

\bibliographystyle{plainnat}
\bibliography{ref}